 \newtheorem{theorem}{Theorem}[section]
 \newtheorem{lemma}[theorem]{Lemma}
 \newtheorem{proposition}[theorem]{Proposition}
 \newtheorem{definition}[theorem]{Definition}
 \newtheorem{remark}[theorem]{Remark}
\def\blue{\color{black}}
\def\blue{\color{black}}
\def\newblue{\color{black}}
\begin{document}

\title{\LARGE Optimal Pricing to Manage Electric Vehicles in \\Coupled Power and Transportation Networks }
 
\author{Mahnoosh Alizadeh,
Hoi-To Wai,
Mainak Chowdhury,
Andrea Goldsmith,
Anna Scaglione,
and Tara Javidi\thanks{This work was supported by the NSF CPS Grant 1330081 and by the U.S. DoE's Office of Electricity   through the Consortium for Electric Reliability
Technology Solutions (administered by LBNL).}
}

%

  \maketitle

\maketitle
\begin{abstract}
We study the system-level effects of the introduction of large populations of Electric Vehicles on the power and transportation networks. We assume that each EV owner solves a decision problem to pick a cost-minimizing charge and travel plan. This individual decision takes into account traffic congestion in the transportation network, affecting travel times, as well as as congestion in the power grid, resulting in spatial variations in electricity prices for battery charging. We show that this decision problem is equivalent to finding the shortest path on an ``extended'' transportation graph, with virtual arcs that represent charging options. Using this extended graph, we study the collective effects of a large number of EV owners individually solving this path planning problem. We propose a scheme in which independent power and transportation system operators can collaborate to manage each network towards a socially optimum operating point while keeping the operational data of each system private. We further study the optimal reserve capacity requirements for pricing in the absence of such collaboration. We showcase numerically that a lack of attention to interdependencies between the two infrastructures can have adverse operational effects.
\end{abstract}

\section{Introduction: a tale of two networks}
Large-scale adaptation of Electric Vehicles (EV) will affect the operation of two cyber-physical networks: power and transportation systems \cite{5453787}. Each of these systems has   been the subject of decades of engineering research. However, in this work, we argue that the introduction of EVs will couple the operation of these two  critical infrastructures. We show that  ignoring this interconnection and assuming that the location of EV plug-in events follows an independent process that does not get affected by electricity prices can lead to instabilities in electricity pricing mechanisms, power delivery, and traffic distribution.  Hence,  we  propose control schemes that acknowledge this interconnection and move both infrastructures towards optimal and reliable operation.  

To achieve this goal, we show that an individual driver's joint charge and path decision problem  can be modeled as a  shortest path problem on an {\it extended transportation graph with virtual arcs}. We  use this extended graph  to study the collective result of all drivers making cost-minimizing charge and path decisions on power and transportation systems.  We then show that two non-profit entities referred to as the independent power system operator (IPSO) and independent transportation system operator (ITSO) can   collaborate   to find  jointly optimal electricity prices, charging station mark-ups, and road tolls, while keeping the data of each system private.  We show that this collaboration is necessary for correct price design. We further study the generation reserve requirements to operate the grid   in the absence of such collaboration.

\subsubsection*{Prior Art}
The study of mechanisms for coping with demand stochasticity and grid congestion  is at the core of power systems research. In particular,  EVs are acknowledged to be one of the primary focuses of   demand response (DR) programs. DR enables electricity demand to become a control asset for the IPSO.  For example, the authors in  \cite{5551257,5982648,contrt,xupan,tong12,evs2, tong,dlsp1,dlsp2,dlsp3,dlsp4,dlsp5,poola, 6622681,kefayati2010efficient,ma2013decentralized,tushar2012economics}, and many others, have  proposed control schemes to manipulate EV charging load using various tools, e.g., heuristic or optimal control, and towards different objectives, e.g., ancillary service provision, peak shaving, load following. However,  a common feature in  \cite{5551257,5982648,contrt,xupan,tong12,evs2, tong,dlsp1,dlsp2,dlsp3,dlsp4,dlsp5,poola, 6622681,kefayati2010efficient,ma2013decentralized,tushar2012economics} is that the location and time of plug in for each request is considered an {\it exogenous} process and is not explicitly modeled. Very few works have considered the fact that, unlike all other electric loads, EVs are mobile, and hence may choose to receive charge at different nodes of the grid following economic preferences and travel constraints.  This capability was considered in \cite{6547832,bayram2012smart}  in the problem of routing EV drivers to the optimal nearby charging station after they announce their need to charge. The authors in \cite{khodayar} consider the case where the operator tracks the mobility of large fleets of EVs and their energy consumption and designs optimal multi-period Vehicle-to-Grid strategies. Here we do not consider the case of fleets and look at a large population of heterogeneous privately-owned  EVs.

\begin{figure}[t]
\centering
\includegraphics[width = \linewidth]{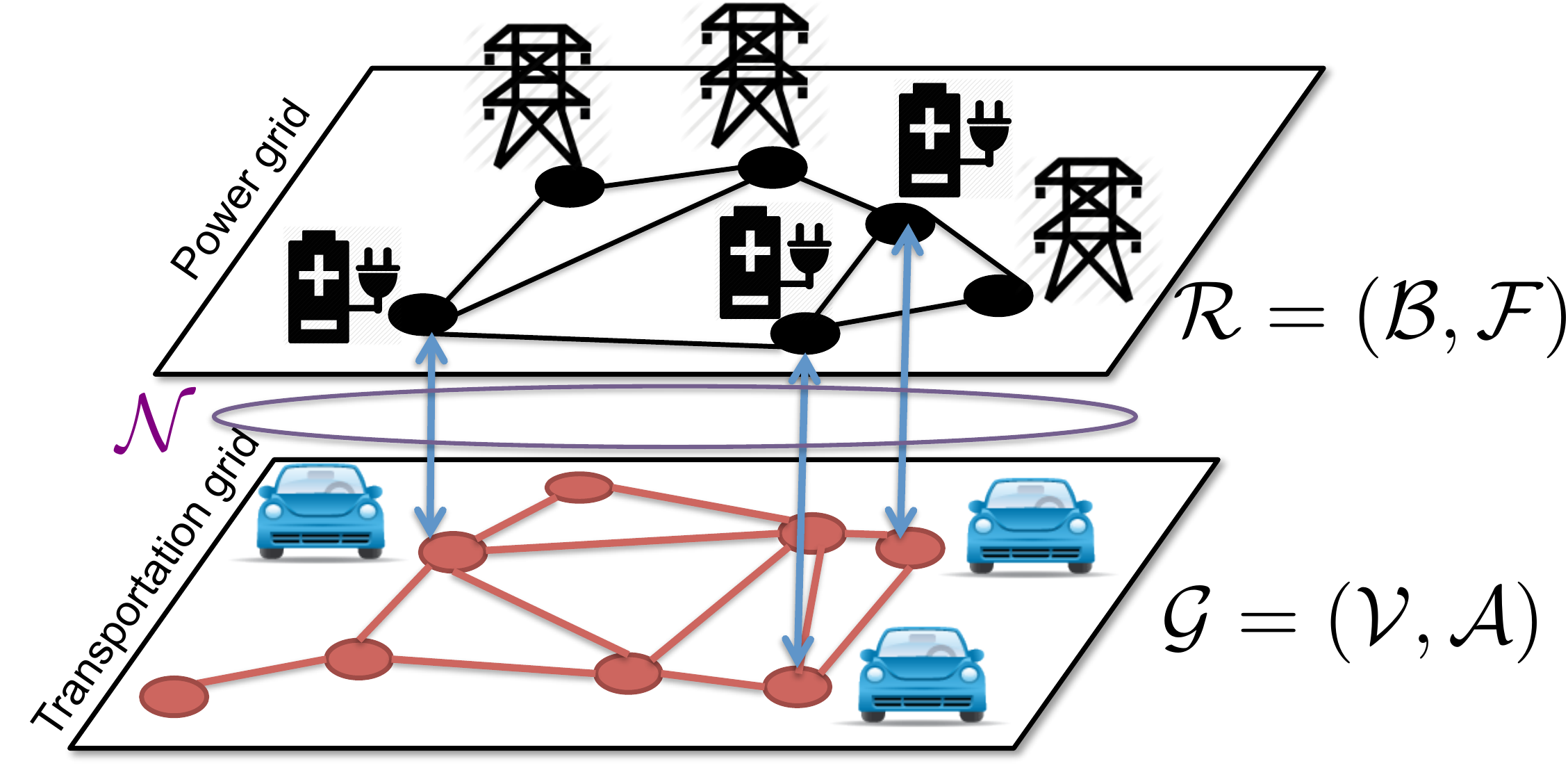}
\caption{Electric Vehicles affect transportation and power delivery networks.} 
\label{transnet}
\end{figure}

Traffic engineering studies mechanisms 
 for coping with road congestion  in the transportation network.
At the individual user level, travel paths are planned to avoid congestion as much as possible, naturally leading to shortest path problems \cite{dreyfus1969appraisal, chabini1998discrete}. When studying the collective actions of users, the so-called Traffic Assignment Problem is concerned with the effect of individuals' selection of routes on the society's welfare, and studies  control strategies to guide the selfish user equilibrium towards a  social optimum, e.g., \cite{peeta2001foundations}, \cite{wardrop1952road}. 

Recently, a line of  research has emerged to study the effect of EVs on transportation systems. For example,  \cite{sachenbacher2011efficient, fontana2013optimal, artmeier2010shortest} look at the individual path planning problem by minimizing the energy consumption of  EVs, leading to a constrained shortest path
problem. However, the interactions with the power grid are not modeled. 
{\blue At the system-level,  \cite{he2014network, wang2014energy} study efficient solutions for characterizing the redistribution of traffic due to the charge requirements of EVs  (paths are forbidden if not enough charge is received to travel them). In contrast to our work,   in \cite{he2014network, wang2014energy}, electricity prices are respectively not considered and taken as given. Accordingly, these works are complementary to ours and do not address the electricity price design aspect that we are interested in (more details in Remark \ref{remarkdp}). To the best of our knowledge, the only work that considers  price design  is \cite{He20131}. In \cite{He20131},  charge is wirelessly delivered to EVs while traveling. Hence, EVs can never run out of charge. The authors show that  if a government agency controls the operations of both the transportation and power networks or can design tolls as a decentralized control measure, the effect of EVs on the  grid can be optimized by affecting the drivers' choice of route. In spite of a somewhat similar set up, \cite{He20131} and our work have major differences: 1) Our model is different in that we assume EVs make stops at charging stations and the amount of charge received is a choice made by the driver, leading to a different pricing structure, based on the concept of virtual charging arcs; 2) We consider  the IPSO and ITSO as two separate entities and look at how  they can design prices if they collaborate together with minimal data exchange using the principles of dual decomposition. We  also study the adverse effects of the lack of such collaboration; 3) We study how the IPSO can set   prices in the absence of such collaboration through procuring generation reserves.  }

\vspace{-0.2cm}

\begin{remark}
To be able to derive analytical results, we have chosen to remain in a static setting. This means that the customers' travel demand, the baseload, and generation costs are all time-invariant. Our preliminary work published as a conference paper \cite{allerton2014}  models this problem under a dynamic setting.  {\blue The main contribution of  [32] is proposing the general model of the ESPP and the extended transportation graph.
However, the dynamic model studied in [32] in its full generality was not amenable to an analytical characterization of the aggregate control problem and hence could not provide design insights. In contrast, the present work introduces significant simplifications by considering a static setting. The static formulation  removes the non-convexity of the problem and allows for a novel analytical treatment.}
\end{remark}

\section{Overview}
{\blue We study a large network of EV and Internal Combustion Engine Vehicles (ICEV) owners that optimize their daily trip costs, including the path they take to complete a trip as well as refueling strategies. A short model of the decision making process by individual EV drivers is first presented in Section III, mainly to introduce the {\it extended transportation graph}, a novel concept we use in this paper to integrate individual decisions into system-level control strategies  for coupled infrastructures.
The extended graph construct captures the fact that EVs' route and charge decisions are affected by the state of two networks, namely the power and the transportation networks.  The transportation network is managed by a non-profit ITSO (red circle in Fig. 2), who knows about the trip patterns of the  population and can impose tolls on public roads to affect the individuals' routing decisions. The power network is managed by a non-profit IPSO (light gray circle), who controls electricity generation costs  (green circle) and is in charge of pricing electricity that affect individual EV's charging decisions. Ideally, we would like   to minimize the total transportation delay and electricity generation cost that the society incurs. However, as the IPSO and the ITSO are two separate entities, we study whether they can achieve this goal with or without direct collaboration under various schemes  presented in Section \ref{sec.so} (and in Fig \ref{schemes}). We numerically study these schemes in Section \ref{sec.numerical}.}

\section{The Individual Driver's Model}\label{sec.ind}
Let us first focus on the decision making process of an individual EV driver (the blue circle in Fig 2).  In order to complete a trip, the driver needs to decide on 1) which path to take to get from his origin to the destination; and 2)  the locations at which he/she should charge the EV battery and the amount of charge to be received at each location.  We model the cost structure associated with these decisions next.

\begin{figure}[t]
\centering
\includegraphics[width = 0.75 \linewidth]{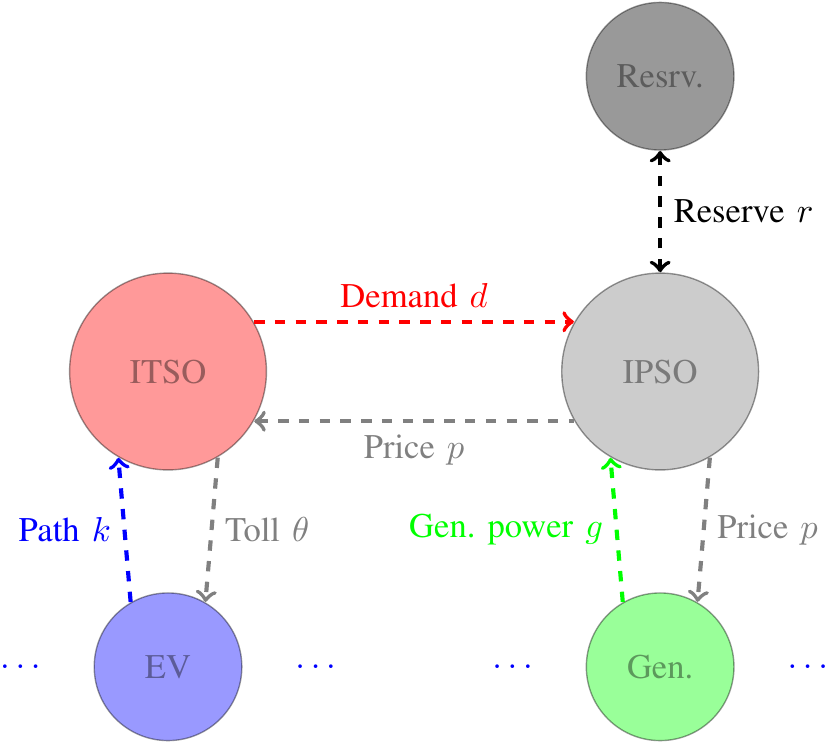}
\caption{The entities involved in the control problem.} 
\label{transnet}\vspace{-0.4cm}
\end{figure}

\subsubsection*{Notation}  We use bold lower case $\mathbf x$ to indicate vectors and bold upper case $\mathbf X$ to indicates matrices. The notation ${ \mathbf x_{\cal I}} = [x_i]_{i \in \mathcal I}$ indicates that the elements that comprise a column vector or a matrix each correspond to a member of a set $\mathcal I$. The symbols $\preceq$ and $\succeq$ denote element-wise  $\leq$ and $\geq$ inequalities in vectors. The transpose of a column vector $\mathbf x$ is denoted by $\mathbf x^T$. The all one and all zero row vectors of size $j$ are denoted by $\mathbf{1}_{1\times j}$ and $\mathbf{0}_{1\times j}$ respectively.

\begin{table}[htbp]\caption{\blue Table of Notation }
\begin{center}
\begin{tabular}{r c p{0.6\linewidth} }
\toprule
$\mathcal V$ & $\triangleq$ & Set of nodes in the transportation network\\
$\mathcal A$ & $\triangleq$ & Set of arcs in the transportation network\\
$\mathcal N$ & $\triangleq$ & Set of nodes with charging facilities\\
$G$ & $\triangleq$ & The transportation graph\\
$\mathcal K$ & $\triangleq$ &  Set of energy-feasible paths that connect the origin and destination for an individual user\\
$s_k$ & $\triangleq$ &  Length of path $k$\\
$\lambda_a$ & $\triangleq$ &Flow on arc $a$ of transportation graph\\  
$\lambda_v$ & $\triangleq$ & Flow into charging station located at node $v$\\
$p_v$ & $=$ & Price of electricity at node $v$\\
$e_v$ & $=$ & Energy received at node $v$\\
$\mathcal E_v$ & $\triangleq$ & Set of possible charging amounts at $v$ \\
$\theta_v$ & $=$& Plug-in fee at node $v$\\
$\rho_v$  &$\triangleq$ & Charging rate at node $v$\\
$\tau_a(\lambda_a)$ & $\triangleq$ & Latency function of traveling on arc $a$\\
$\tau_v(\lambda_v)$ & $\triangleq$ &Wait time to be plugged in at node $v$\\
$\gamma$ &$\triangleq$ & Value of time to users\\
$s_a(\lambda_a)$ & $\triangleq$ & Cost each user incurs for traveling on arc $a$\\
$s_ v (e_v, \lambda_v)$ & $\triangleq$ & Cost  to receive charge of $e_v$ at node $v$\\
$e_a$ & $\triangleq$ & Energy required to travel arc $a$\\
$G^e$ &$\triangleq$&  The extended transportation graph \\
$\mathcal S$ & $\triangleq$ & Set of nodes in $G^e$\\
$\mathcal L$ & $\triangleq$ & Set of arcs in $G^e$\\
$\mathcal L_v$ & $\triangleq$ & Set of virtual charging arcs for charging station at node $v$\\
$\mathcal C$ & $\triangleq$ & Set of all virtual charging station entrance  and bypass arcs\\
$b_a$ & $\triangleq$ & The electricity bill to charge for virtual arc $a$ \\
$\mathcal Q$ & $\triangleq$ & Set of different origin-destination clusters $q$\\
$\mathcal K_q$& $\triangleq$ &Set of feasible paths on $\mathcal G^e$ for cluster $q$ \\
$m_{q}$ & $\triangleq$ & Rate of EVs in cluster $q$ \\
$f_{q}^k$ &$=$& rate of  cluster $q$ EVs that choose path $k \in \mathcal{K}_{q}$\\
$\mathbf{f}_q$ & $\triangleq$ & $[f_q^k]_{k \in \mathcal{K}_q}$\\
$\mathbf{A}_q$ & $\triangleq$ &Arc-path incidence matrix for cluster $q$\\
$\mathbf g$ & $=$ & Vector of generation outputs at all network nodes\\
$\boldsymbol{c} (\mathbf{g})$& $\triangleq$ & Vector of network generation costs \\
$\mathbf u$ & $\triangleq$ & Vector of inelastic non-EV demand at all network nodes\\
$\mathbf d$ & $=$ & Vector of EV charging demand at all network nodes\\
$\mathbf H$ & $\triangleq$ & The power transfer distribution matrix\\
$\mathbf c$ & $\triangleq$ & Line flow limits\\
$\mathbf M$ & $\triangleq$ & Matrix that maps virtual link flow to power system load\\
$s^e_a(\lambda_a)$& $=$ & Auxilary cost function for arc $a$ (see \eqref{aux})\\
$\boldsymbol{w}^e(\boldsymbol{\lambda}) $& $=$ & $ [\int_{x=0}^{\lambda_a}s^e_a(x) dx]_{a\in \mathcal L}$\\
$\boldsymbol \xi$  & $\triangleq$ & Reserve capacity prices at all network nodes\\
$\boldsymbol r$  & $\triangleq$ & Reserve capacity prices at all network nodes\\
\bottomrule
\end{tabular}
\end{center}
\label{tab:TableOfNotationForMyResearch}
\end{table}

\vspace{-0.3cm}

\subsection{Congestion Costs}\label{transcost}
 We model the transportation network through a  connected directional graph $\mathcal{G} = (\mathcal{V},\mathcal{A})$, where $\mathcal{V}$ and $\mathcal{A}$  respectively denote the set of 
 nodes and arcs of the graph.

Traveling on the transportation network is associated with a cost for the user since he/she values the time spent en route. The time to travel between an origin and a destination node  is comprised of the time spent on arcs that connect these two nodes on $\mathcal{G}$.  Here we adopt the popular Beckmann model for the cost of traveling an arc, i.e., a road section \cite{beckmann1956studies}. Accordingly, we assume that the travel time for each arc $a\in \mathcal{A}$ only depends on
 the rate of EVs per time unit that travel on the arc, which we refer to as the arc  flow and denote by $\boldsymbol{\lambda} = [\lambda_a]_{a\in \mathcal{A}}$. The time it takes to travel arc $a$ is then represented by a latency function $\tau_a(\lambda_a)$, which is convex, continuous, non-negative, and increasing in $\boldsymbol{\lambda}$. The congestion  cost that a user incurs for traveling arc $a$ is given by:
\begin{equation}\label{incon}s_a(\lambda_a)= \gamma \tau_a(\lambda_a),\end{equation}
where $\gamma$ is the cost of one unit of time spent en route. Hence, the cost to travel link $a$ is:
\begin{equation} \mbox{total cost to travel link $a$} = s_a(\lambda_a) + \theta_a, \end{equation}
where $\theta_a$ corresponds to any tolls that the driver should pay to the ITSO, if any such toll is enforced for link $a$.

Moreover, traveling each arc $a\in \mathcal{A}$ requires a certain amount of energy $e_a$ (see Fig. 2). Energy needs to be received  from the power grid and stored in the EV battery. The cost that the user incurs to receive battery charge is modeled next.

\vspace{-0.3cm}

\begin{figure}
\centering
\includegraphics[width= 0.7\linewidth]{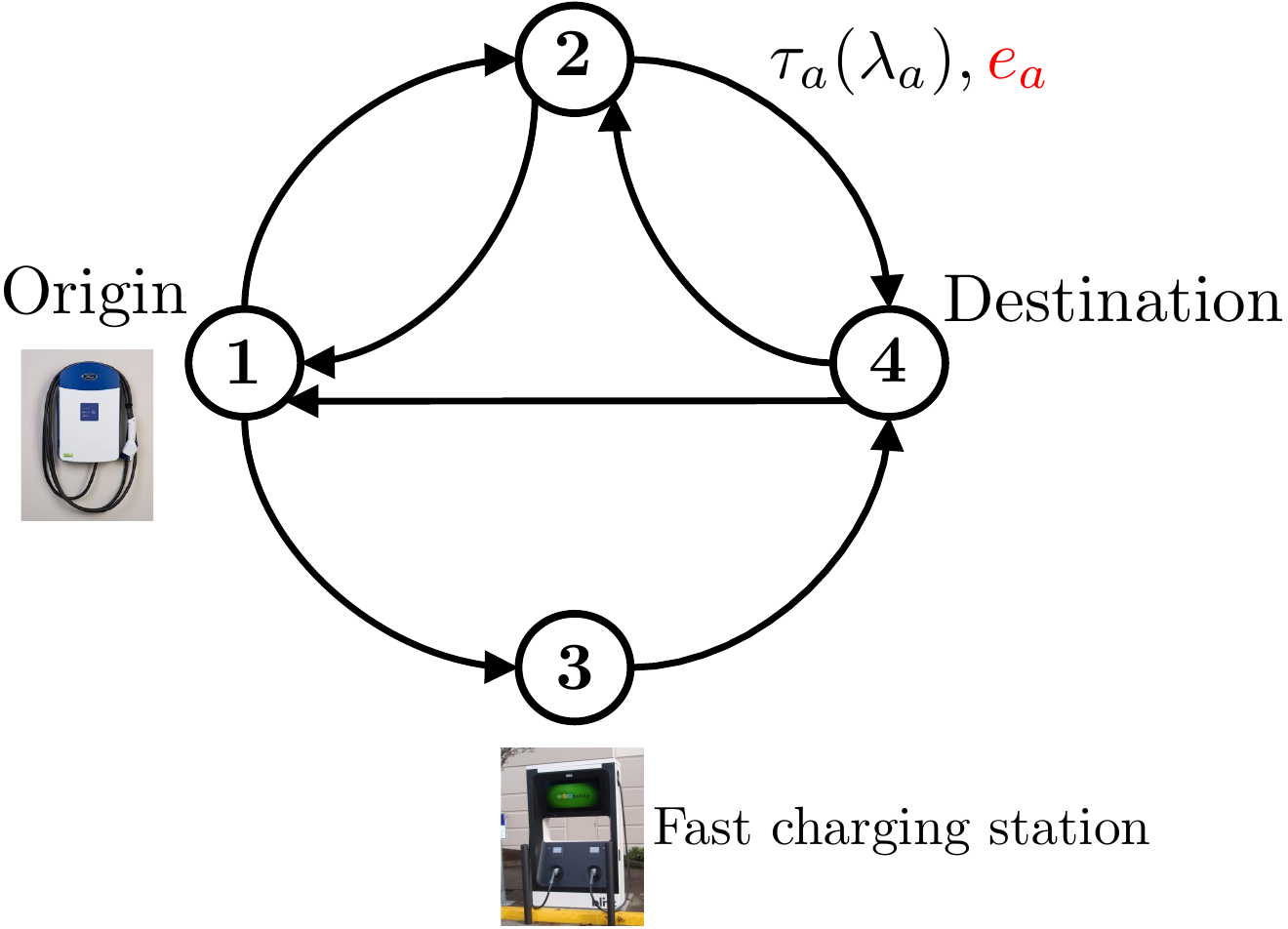}\label{real}
  \caption{ The transportation graph $\mathcal G$.}
\vspace{-0.3cm}
\end{figure}

\subsection{Charging Costs}
A subset of nodes  on the transportation network $\mathcal{N} \subseteq \mathcal{V}$ are equipped with battery charging facilities and, hence, the EV drivers have the choice of charging their batteries at these locations. Naturally, to be able to provide charging services, the  nodes $\mathcal{N}$ are also a subset of the nodes $\mathcal{B}$ that constitute the power grid graph $\mathcal{R} = (\mathcal{B}, \mathcal{F})$. Each node $v \in \mathcal{B}$ has an associated price for electricity $p_v$. Consequently, if the EV driver chooses to charge at location $v$, he/she will pay:
\begin{equation} \mbox{electricity cost of charging}= p_v e_v + \theta_v, \label{bill}\end{equation}
where $e_v \in \mathcal{E}_v$ is the battery charge amount received at $v$ chosen from a finite set $\mathcal{E}_v$, and $\theta_v$ corresponds to a one-time plug-in fee for the charging station at $v$. Moreover, if $v$ is not the origin  of the trip, the driver needs to spend some extra time en route in order to receive charge at $v$. This is due to any congestion at the fast charging stations (FCS) plus the time it takes to receive charge. Hence, an extra inconvenience cost is incurred. If the charging rate at FCS $v$ is $\rho_v$, and the rate of EVs  being plugged in for charge at this location is denoted by $\lambda_v$, this cost is equal to:
\begin{equation}s_ v (e_v, \lambda_v)\!\!= \!\!\begin{cases} \gamma\left(\frac{e_v}{ \rho_v} + \tau_v(\lambda_v)\right),~~~ \mbox{$v$ is an FCS}\\ 0,~~~~~~~~~~~~~~~~~~ \mbox{$v$ is trip origin node} \end{cases}\label{incon.charge}\end{equation}
where $\tau_v(\lambda_v)$ denotes the wait time due to congestion, and is a soft cost model to capture limited station capacity.

\subsection{The Extended Transportation Graph with Virtual Arcs}
The EV driver's goal would be to find the least cost path that connects the origin and the destination on the transportation graph $\mathcal{G}$ while making sure that the EV battery never runs out of charge\footnote{\blue Note that with the cost structure we have defined, the EV driver will reach the destination with minimum-possible leftover charge. An extension of the analytical framework to include the value of leftover charge at the destination in the optimization is trivial and has been removed for brevity of notation. We refer the reader to our conference paper  \cite{allerton2014}  for this extension.}.

Here we show that we can recast the EV driver's route and charge problem as a resource-constrained shortest path problem  on a new {\it extended transportation {multi}graph} $G^e(\mathcal{S}, \mathcal{L})$. This definition will help us study and control the aggregate effect of individual EVs on power and transportation systems.
 
\begin{definition}   A  travel {\it path} $k$  on the graph $\mathcal{G}$
is characterized by an ordered sequence of arc indices  $\mathbf{a}_k$ where
the head node of $[\mathbf{a}_k]_i$ is identical to the tail node of  $[\mathbf{a}_k]_{i+1}$ for all $i= 1, \ldots, s_k-1$. The length of  path $k$ is $s_k$. Alternatively, if $\mathcal{G}$ is simple, path $k$ can be written as an ordered sequence of $s_k$ node indices $\mathbf{v}_k$ (excluding the destination node).   We further denote a vector of previously defined quantities associated with the arcs $\mathbf{a}_k$  using subscripts, e.g.,   $\boldsymbol{e}_{\mathbf{a}_k} = [e_a]_{a \in \mathbf{a}_k}$ and $\boldsymbol{s}_{\mathbf{a}_k}(\boldsymbol{\lambda}_{\mathbf{a}_k})$ respectively denote the vectors of charge amounts and travel time required to travel each arc on path $k$.
\end{definition}

%

To complete a trip, the driver incurs two forms of costs: the cost associated with arcs and the cost associated with the charging decisions taken at nodes.  However, we can observe charging is very similar in nature to traveling: 1) it takes a certain amount of time (due to both charging rate limitations  and congestion); 2) it has a cost; and 3) it changes the energy level of the battery.
Acknowledging this similarity, we  transform the EV driver's decision problem to a shortest path problem by  associating charging decisions made at the nodes $v$ of the transportation graph to a set of new {\it virtual arcs} to be added at each node $v \in \mathcal{N}$ where charging is possible. At each origin and destination node where charging is possible, the following transformation would capture all decisions:
\begin{itemize}
\item The decision of how much to charge: a set of virtual arcs $\mathcal{L}_v$ added at node $v$ are each associated with a specific choice of how much to charge, i.e., $e_v \in \mathcal{E}_v$.  Hence, the energy {\it gained} by traveling each new virtual arc is set to be one such  member of $\mathcal{E}_v$ (red arcs in Fig. 4). Equivalently, we can say that the energy required to travel the virtual arc is negative. Travel time is  $\frac{e_v}{ \rho_v}$.
\end{itemize} 
At the FCS, these transformations  capture all decisions:
\begin{itemize}
\item The decision to stop or skip stopping at a charge station en route:   the driver can either take a {\it charging station entrance arc} (labeled {\it ``stop''} in Fig 3) and plug in their EV at the station, or skip stopping at the station via a {\it bypass arc} with zero travel time and energy requirements (green arcs in Fig. 3). Charging station entrance arcs can be congested;
\item The decision of how much to charge if stopping at $v$: the charging station entrance arc is connected in series to a set of virtual arcs $\mathcal{L}_v$  capturing the choice of amount of charge $e_v \in \mathcal{E}_v$ (blue arcs in Fig. 3).
\end{itemize}
The flow on the charging station   entrance arc will capture the wait time to be plugged in at the station.    The set of all entrance and bypass arcs for all charging stations is $\mathcal C$. The new extended transportation graph with these virtual arcs would then have the following set of arcs:
$$\mathcal{L} = (\cup_{v\in \mathcal{N}}  \mathcal{L}_v)  \cup \mathcal{C} \cup \mathcal{A}.$$

\begin{figure}
\includegraphics[width=\linewidth]{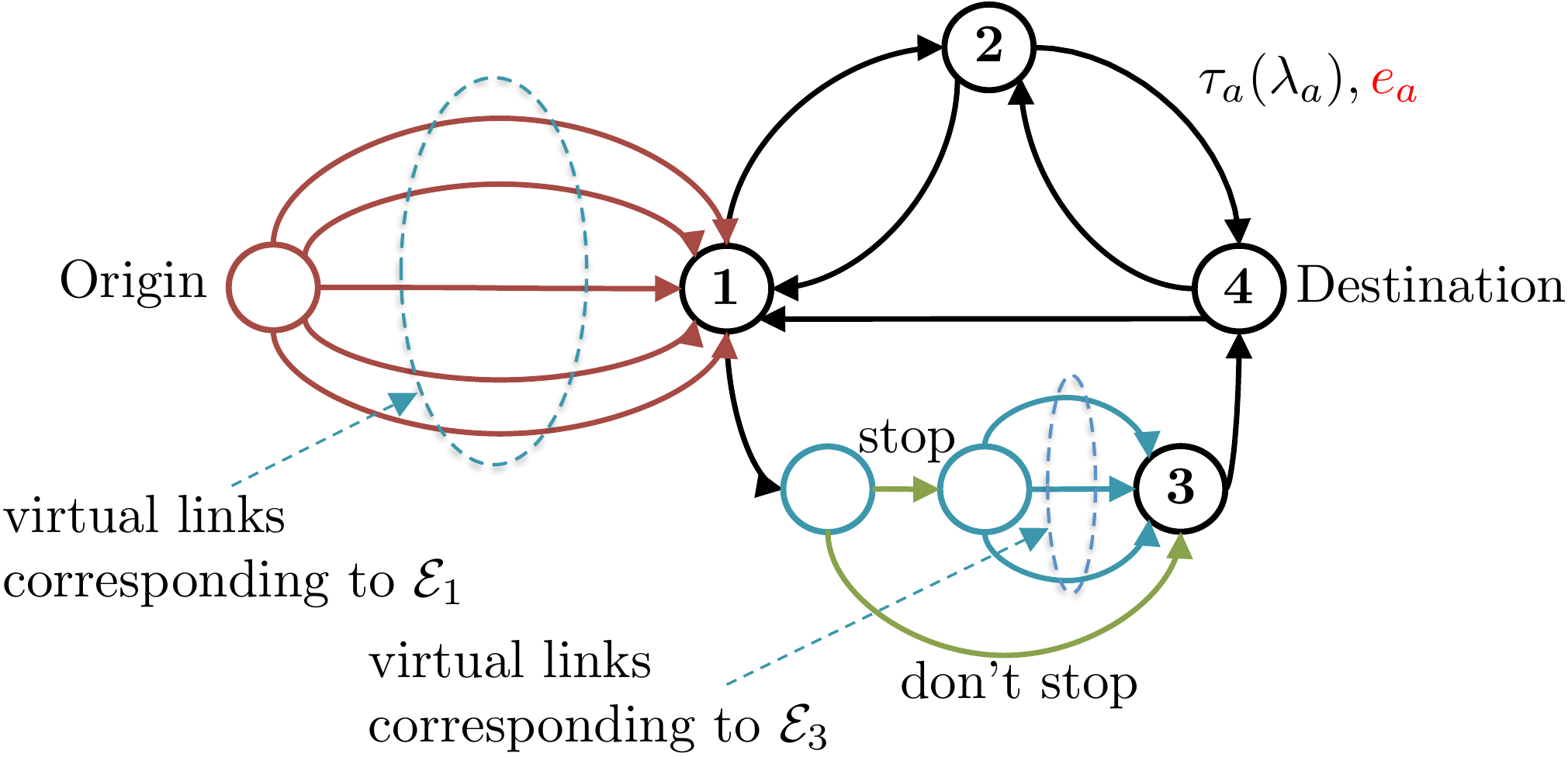}
  \caption{The extended graph corresponding to Fig. 2.}
\vspace{-0.3cm}
\end{figure}

Consequently, the transformed problem seeks
a shortest path on this extended graph from the origin to the destination, with the cost of traveling each arc being the sum of its travel time cost   and all monetary charges such as the electricity bill or tolls. The travel time costs on $G^e(\mathcal{S}, \mathcal{L})$ is given by:
\begin{align} \label{costslambda}
s_a(\lambda_a) &= \begin{cases}  \gamma \tau_a(\lambda_a),~~~~~~~~~~~~~~~~~~~~~~a\in \mathcal{A} \cup  \mathcal{C}\\
\gamma \frac{e_a}{\rho_v},~~~~~~~~~~~~a \in \mathcal{L}_v, v\in \mathcal{N} - \mbox{origin}\\
0,~~~~~~~~~~~~~~~~~~~~~~a \in \mathcal{L}_v, v = \mbox{origin}
\end{cases}
\end{align}
All other monetary costs can be captured as:
\begin{align}
b_a &= \begin{cases}\theta_a,~~~~~~~~~~~~~~~a\in \mathcal{A}\cup  \mathcal{C}\\
p_v e_a,~~~~~~~a \in \mathcal{L}_v, v\in \mathcal{N} 
\end{cases}
\end{align}
and hence, each driver selfishly optimizes  their trip plan by solving the Energy-aware Shortest Path Problem (ESPP):
\begin{align} \label{ind.opt2}
&\min_{k \in \mathcal{K}} \mathbf{1}_{1\times s_k} ( \boldsymbol{s}_{\mathbf{a}_k}(\boldsymbol{\lambda}_{\mathbf{a}_k}) +  \boldsymbol{b}_{\mathbf{a}_k}) 
\end{align}
where $\mathcal{K}$ is now the set of energy-feasible paths that connect the origin and the destination on the extended  graph $G^e(\mathcal{S}, \mathcal{L})$. Energy-feasibility of a path ensures that the battery will never run out of charge en route. Next, we define energy-feasibility mathematically.

{\blue 
\begin{definition}
 A path $k$ is energy-feasible on $G^e$ iff $\forall j = 1,\ldots, s_k$:
\begin{align}
 &0 \leq \mbox{Initial charge} - [\mathbf{1}_{1\times j}, \mathbf{0}_{1\times ({s_k - j})}] \boldsymbol{e}_{\mathbf{a}_k}  \leq \mbox{battery capacity}. \nonumber
\end{align}
\end{definition}
These two constraints ensure that the EV never runs out of charge en route  if taking path $k$, and that the battery charge state  never exceeds battery capacity. A vector of dimension zero is simply empty.
Note that with this definition, we can determine whether a path is energy-feasible  independently of the network congestion mirrored through $\boldsymbol{\lambda}_{\mathbf{a}_k}$.  Hence, for system-level control of $\boldsymbol{\lambda}_{\mathbf{a}_k}$, the set of energy-feasible paths can be calculated offline.}
 
\begin{remark}\label{remarkdp}
The ESPP \eqref{ind.opt2}  can be solved using Dynamic Programming (DP) algorithms with pseudo-polynomial complexity  \cite{cai1997time}.  {\blue Polynomial-time Dijkstra-like algorithms for  solving the shortest path problem cannot be applied due to the existence of the energy-feasibility constraint (see \cite{sachenbacher2011efficient, dror1994note}). This is mainly because the cost of a path is no longer just
the sum of its  arc costs (as energy constraints cannot be
attributed to individual arcs but a sum over multiple arcs).
Proposing efficient solution methods for the ESPP is beyond the scope of this paper. Instead, our focus is to use the extended graph to study aggregate control strategies. We refer the reader to recent papers   studying efficient solutions and search heuristics for variations of ESPP, e.g., \cite{wang2014energy,timewindowsev,6183286,storandt2012quick}. For our small numerical experiment, we use a brute force approach to enumerate all loop-free energy-feasible paths for all origin-destination pairs on the extended transportation graph, as often done in the transportation literature. While for our small  experiment the complexity of this approach does not pose a computational challenge, in more realistic models this is an issue that needs to be properly addressed to allow scalability. We will consider this issue as part of our future work.     }
\end{remark}

\vspace{-0.1cm}

Now, imagine that    every EV owner in the society solved \eqref{ind.opt2} to plan their trips. These users share two  infrastructures: the transportation network, and the power grid. Hence, collectively, EVs give rise to a traffic and load pattern, determining which roads and grid buses are congested and hence, will  have longer travel times and higher electricity prices. Through this interaction, individuals affect each other's cost, leading to the system-level problem that we are interested in.

\section{System Level Model } \label{sec.so}

At the system level, the extended transportation graph helps us  study the collective effect of individual drivers  on  traffic and energy loads  as a network flow problem.    Here virtual arcs are added   at all potential origins and FCSs.

At the aggregate level, the system variables, i.e., the flow rate of vehicles on arcs $\lambda_a$, the  price of electricity $p_v$, and the  tolls $\theta_a$ can no longer be considered 
  as variables imposed upon the system but rather as variables to be jointly optimized by system operators.

If a single  entity was in charge of monitoring the state of both networks and controlling all EVs' charge and route decisions, they can maximize the social welfare by solving for the optimum route and charge plan for each individual such the the total transportation congestion and generation costs that the society incurs is minimized. In doing so, this entity needs to ensure that the constraints of the  transportation and power systems are not violated.
The power system's constraints ensure the balance of supply and demand in the grid and that the physical limits of transmission lines are not violated. The transportation system constraints ensure that every driver will be able to finish their travel. 

In reality, power and transportation systems are operated by the IPSO and the ITSO respectively, and their operational data is not shared. The IPSO is in charge of optimizing generation costs subject to power system constraints, and the ITSO is in charge of optimizing transportation costs subject to transportation system constraints. Also, individuals' route and charge decisions can only be affected through prices. We first study the ITSO and IPSO   strategies separately in Subsections \ref{sec.itso.problem} and \ref{sec.ipso.problem} and their optimal pricing. Then we look at how they  interact  (and possibly  achieve the socially optimal outcome) in Section V.

\vspace{-0.2cm}

\subsection{The ITSO's Charge and Traffic Assignment Problem}\label{sec.itso.problem}
 {\blue We assume that drivers  belong to   a finite number of classes $q \in \mathcal Q$. Vehicles in the same class share the same origin and destination. Vehicles could include both EVs as well as ICEVs.  A given class q would contain either EVs or ICEVs but not both.  Drivers of the same class $q$ are represented by a  set of  feasible paths $\mathcal K_q$, each of which allows them to to finish their trip on the extended graph. For EVs, this is equivalent to the set of energy-feasible paths given in Definition  II.2 and can be enumerated offline for each class. For ICEVs, we can assume these paths simply include transportation arcs in $\mathcal{A}$ that connect the origin and destination, and do not enter charging stations.  Clearly, any other path selection method that considers more realistic constraints can also be applied.} We leave the study of optimal clustering mechanisms that assign heterogeneous users to a finite of number of classes to future work.

Each customer  directly affects the flow  of the arcs that constitute his/her path. To model this effect, we define:
\begin{itemize}
\item $m_{q}$ as the travel demand rate (flow) of EVs in cluster $q$. This demand is taken as deterministic  and given;
\item $f_{q}^k$ as the rate of  cluster $q$ EVs that choose path $k \in \mathcal{K}_{q}$. We define $\mathbf{f}_q = [f_q^k]_{k \in \mathcal{K}_q}$.
\end{itemize}
 Naturally, since every driver has to pick one path, the following conservation rule holds:
\begin{equation} \mathbf{1}^T \mathbf{f}_q =m_{q}.\end{equation}

Given the   path decisions of all EV drivers, i.e., the $f_{q}^k$'s, the flow of EVs on arc $a$ is given by $\lambda_a = \sum_{q\in \mathcal{Q}, k \in \mathcal{K}_{q}}   \delta_{a}^k  f_{q}^k$,
where $\delta_{a}^k$ is an arc-path incidence indicator (1 if arc $a$ is on path $k$ and 0 otherwise). This  is written in matrix form as:
\begin{equation}\label{ftolambda2} \textstyle
\boldsymbol \lambda = \sum_{q\in \mathcal{Q}}  \mathbf{A}_q  \mathbf{f}_q, 
\end{equation}
where $\boldsymbol{\lambda} = [\lambda_a]_{ a\in{\mathcal L}}$ denotes the vector of network flows and $\mathbf{A}_q$ is a $|\mathcal L| \times |\mathcal K_q|$ matrix such that  $[\mathbf{A}_q]_{a,k} =   \delta_{a}^k$.

The flow on the virtual arcs of the  extended graph  leads to a power load.
We denote the charging demand at each node $v \in \mathcal B$ of the grid as a vector $\mathbf d = [d_v]_{v \in \mathcal{B}}$, given by:
\begin{equation}\mathbf{d} = \mathbf M  \boldsymbol{\lambda}, \label{lambdatod}\end{equation}
where  $\mathbf M$ is a $|\mathcal B| \times |\mathcal L|$ matrix given by:
\begin{equation}[\mathbf M]_{v,a} = \begin{cases} e_a,~~a\in \mathcal{L}_v, v \in \mathcal{N} \\0,~~~~\mbox{else}\end{cases} \label{mdef}\end{equation}

Let $\boldsymbol{s}(\boldsymbol{\lambda}) = [s_a(\lambda_a)]_{a\in \mathcal L}$.
If an ITSO is in charge of determining the optimal path and charge schedule for each EV  such that the aggregate   cost is minimized, it  can solve a modified version of the classic static traffic assignment problem  \cite{dafermos1969traffic} on the extended graph, which we refer to as the charge and traffic assignment problem (CTAP): 
\begin{align} \label{soc.itso}
&\min_{\mathbf{f}_q, q \in {\cal Q} } \boldsymbol{\lambda}^T \boldsymbol{s}(\boldsymbol{\lambda}) + \mathbf p^T  \mathbf d \\& \mbox{s.t.}~~ (\star)  \begin{cases} \mathbf{f}_q  \succeq \mathbf 0,~\forall q\in \mathcal{Q}, \\ \mathbf{1}^T \mathbf{f}_q = m_{q},~\forall q\in \mathcal{Q},\\ \boldsymbol \lambda = \sum_{q\in \mathcal{Q}}  \mathbf{A}_q  \mathbf{f}_q,~\mathbf{d} = \mathbf M  \boldsymbol{\lambda},\end{cases}\nonumber
\end{align}
where $\mathbf p = [p_v]_{v \in \mathcal B}$.

\vspace{-0.2cm}
\subsection{The IPSO's Economic Dispatch Problem}\label{sec.ipso.problem}

To serve the charging demand of EVs, a set of generators are located at different nodes of the power network $\mathcal{R} = (\mathcal{B}, \mathcal{F})$.  For brevity, let us assume that a single merged generator is located at each node of the grid. Assuming that the generation at each node is denoted by a vector $\mathbf g = [g_v]_{v \in \mathcal{B}}$ and the baseload (any load that does not serve EVs) by a vector $\mathbf u = [u_v]_{v \in \mathcal{B}}$, there are three constraints that define a {\it feasible generation mix} $\mathbf g$ in the power grid. First of all,  $g_v$ must be within the capacity range of the generator at node $v$, i.e., $\mathbf g^{\min} \preceq \mathbf g \preceq \mathbf g^{\max}$.
Second, the demand/supply balance requirement of the power grid should be met, i.e.,
\begin{align} &  \mathbf{1}^T  ( \mathbf d  + \mathbf u - \mathbf g)     =  0. \end{align}  
Third and last, the transmission line flow constraints of the grid under the  DC approximation \cite{verma2009power} translate into:
\begin{align} & \mathbf H ( \mathbf d + \mathbf u - \mathbf g)  \preceq \mathbf c,  \end{align}  
where the matrix $\mathbf H$ is the power transfer distribution matrix of the  grid, explicitly defined in \cite{verma2009power}, and $\mathbf c = [c_f]_{f\in \mathcal F}$ is a vector containing arc (line) flow limits (in both directions).

In most power grids, one such feasible generation mix $\mathbf g$ is picked by an  IPSO to serve demand. We assume that at least one feasible generation mix always exists for every possible load profile. The  IPSO's objective is to pick the cheapest feasible generation mix.   Let us 
 denote the cost of generating $g_v$ units of energy at node $v \in \mathcal{B}$ as a strongly convex and continuous function  $c_v(g_v)$, and the vector of generation costs as $\boldsymbol{c} (\mathbf{g}) = [c_v(g_v)]_{v \in \mathcal{B}}$.
Given  a demand $ \mathbf d$ from EVs, the IPSO  solves an economic dispatch problem to decide the optimal generation dispatch $\mathbf g*$ \cite{glover2011power}:
\begin{align} \label{soc.ipso}
\mathbf g^* = &~\mbox{argmin}_{  \mathbf{g}}   \mathbf{1}^T \boldsymbol{c} (\mathbf{g})   \\& ~\mbox{s.t.}~~  \mathbf g^{\min} \preceq \mathbf g \preceq \mathbf g^{\max}, \nonumber~\mathbf{1}^T  ( \mathbf d + \mathbf u - \mathbf g)  =  0,\nonumber\\ &~~~~~   \mathbf H ( \mathbf d + \mathbf u - \mathbf g)  \preceq \mathbf c.  \nonumber
\end{align}

Note that the optimal traffic and generation schedules determined through \eqref{soc.itso} and \eqref{soc.ipso} minimize the total cost to society. However, they do not necessarily minimize the cost of each individual entity that is involved, e.g., the EV drivers or the generators. Hence, one cannot merely ask these selfish users to stick to the socially optimal schedule. An economic mechanism is necessary to align selfish behavior with socially optimal resource consumption behavior. The use of pricing mechanisms is a way of achieving this goal in a  distributed and incentive-compatible fashion. We highlight pricing mechanisms that can be used for  \eqref{soc.itso} and \eqref{soc.ipso}  next.

\vspace{-.4cm}

\subsection{Pricing Mechanism for Electric Power}

To incentivize profit-maximizing generators to produce at an output level $g_v$, we apply the principle of marginal cost pricing. The principle states that the electricity price at node $v$,  i.e., $p_v$, must satisfy:
\begin{equation}
\frac{\partial c_v(g_v)}{\partial g_v} = p_v \rightarrow
\nabla_{\mathbf{g}} \mathbf{1}^T   \boldsymbol{c} (\mathbf{g}) = \mathbf p.
\end{equation}
Let us introduce Lagrange multipliers $\gamma$ and $\boldsymbol \mu$ respectively for the balance and line flow constraints in  \eqref{soc.ipso}.
Writing the KKT stationarity condition for \eqref{soc.ipso} then leads to:
\begin{equation}\label{lmpdef}
 \mathbf p = \gamma  \mathbf{1} +  \mathbf H^T \boldsymbol \mu,
\end{equation}
 commonly referred to as Locational Marginal Prices (LMP) in the power system literature. 
The reader should note that this is the same price vector $\mathbf p$ that is fed into the ITSO optimization \eqref{soc.itso} and would affect the charging demand at different nodes of the grid, i.e., $ \mathbf d$ in \eqref{soc.ipso}, which would in turn affect the price $\mathbf p$ again. This feedback loop highlights the coupling between smart power and transportation systems that we are interested in, further studied in Section V.
\vspace{-.3cm}
\subsection{Tolls to Align Selfish User Behavior with Social Optimum}
In the transportation network, if every user solves an ESPP given in \eqref{ind.opt2} and no tolls are imposed by the ITSO ($\theta_a = 0, \forall a \in \mathcal{A}\cup  \mathcal{C}$), the aggregate flow would be determined based on a state of user-equilibrium. This user equilibrium flow is most likely not equivalent to the social optimal flow in \eqref{soc.itso}.  To mathematically characterize this equilibrium,  define an auxilary modified cost function  $\boldsymbol{s}^e(\boldsymbol{\lambda}) = [s^e_a(\lambda_a)]_{a\in \mathcal L}$ for the extended transportation graph's arcs as:
\begin{equation}\label{aux}\boldsymbol{s}^e(\boldsymbol{\lambda}) = \boldsymbol{s}(\boldsymbol{\lambda}) + \mathbf M^T \mathbf p,\end{equation} with  $ \boldsymbol{s}(\boldsymbol{\lambda})$ and $\mathbf M$ given by \eqref{costslambda} and \eqref{mdef}.  Moreover, let $\boldsymbol{w}^e(\boldsymbol{\lambda}) = [\int_{x=0}^{\lambda_a}s^e_a(x) dx]_{a\in \mathcal L}$.
Then, according to the well-known Wardrop's first principle  \cite{yildirim2001congestion}, the user equilibrium flow would be the solution of the optimization problem:
\begin{align} \label{ue.itso}
&\min_{\mathbf{f}_q }  \mathbf{1}^T\boldsymbol{w}^e(\boldsymbol{\lambda})  \\& \mbox{s.t.}~~ \mbox{Constraints $(\star)$ in \eqref{soc.itso}}\nonumber
\end{align}
So how can the ITSO get the individual drivers to follow the socially optimal flow calculated in \eqref{soc.itso}? We present the answer in Theorem \ref{mp}.

\begin{theorem}\label{mp}(Marginal congestion pricing)
The aggregate effect of individual route and charge decisions made by EV drivers, i.e., the solution of \eqref{ue.itso}, will
be equivalent to the optimal social charge and route decision  in \eqref{soc.itso} iff the ITSO  imposes a toll 
$\boldsymbol{\vartheta} = [\vartheta_a]_{ a\in{\mathcal L}}$ at each arc of the extended graph  equal to the externality introduced by each user that travels the arc on the other users' costs, i.e.,
\begin{equation}\boldsymbol \vartheta = \nabla(diag(\boldsymbol{\lambda})\boldsymbol{s}(\boldsymbol{\lambda})) - \boldsymbol{s}(\boldsymbol{\lambda}).\end{equation}
\end{theorem}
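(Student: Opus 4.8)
The plan is to exploit the fact that both the social problem \eqref{soc.itso} and the toll-augmented user-equilibrium problem \eqref{ue.itso} are convex programs over the \emph{same} feasible polytope: the constraints $(\star)$ are linear in the path-flow variables $\mathbf{f}_q$, and $\boldsymbol{\lambda}=\sum_q\mathbf{A}_q\mathbf{f}_q$, $\mathbf{d}=\mathbf{M}\boldsymbol{\lambda}$ are linear images of $\mathbf{f}$. Consequently each admits a variational-inequality characterization: a feasible $\mathbf{f}^\star$ is optimal iff $\nabla_{\mathbf{f}}\Phi(\mathbf{f}^\star)^T(\mathbf{f}-\mathbf{f}^\star)\geq 0$ for every feasible $\mathbf{f}$, where $\Phi$ is the respective objective. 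Since both objectives depend on $\mathbf{f}$ only through $\boldsymbol{\lambda}$, it suffices to compare the two reduced gradients $\nabla_{\boldsymbol{\lambda}}\Phi$ and then push them back through $\nabla_{\mathbf{f}_q}\Phi=\mathbf{A}_q^T\nabla_{\boldsymbol{\lambda}}\Phi$; if the $\boldsymbol{\lambda}$-gradients agree, the two variational inequalities coincide and the minimizers in arc-flow are identical. First I would record that the electricity term $\mathbf{p}^T\mathbf{d}=(\mathbf{M}^T\mathbf{p})^T\boldsymbol{\lambda}$ is \emph{linear} in $\boldsymbol{\lambda}$, so it contributes the same vector $\mathbf{M}^T\mathbf{p}$ to both gradients and cancels.

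Next I would compute the two reduced gradients explicitly. Because the latency costs are separable ($s_a$ depends only on $\lambda_a$), the social objective $\boldsymbol{\lambda}^T\boldsymbol{s}(\boldsymbol{\lambda})+(\mathbf{M}^T\mathbf{p})^T\boldsymbol{\lambda}$ has $a$-th gradient entry $s_a(\lambda_a)+\lambda_a s_a'(\lambda_a)+[\mathbf{M}^T\mathbf{p}]_a$, i.e.\ $\nabla(\mathrm{diag}(\boldsymbol{\lambda})\boldsymbol{s}(\boldsymbol{\lambda}))+\mathbf{M}^T\mathbf{p}$ in the paper's notation. For the equilibrium objective augmented by the toll, $\sum_a\int_0^{\lambda_a}\bigl(s_a^e(x)+\vartheta_a\bigr)\,dx$, the fundamental theorem of calculus gives $a$-th gradient entry $s_a^e(\lambda_a)+\vartheta_a=s_a(\lambda_a)+[\mathbf{M}^T\mathbf{p}]_a+\vartheta_a$. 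Subtracting, the two reduced gradients coincide iff $\vartheta_a=\lambda_a s_a'(\lambda_a)$ for every arc, which is exactly $\boldsymbol{\vartheta}=\nabla(\mathrm{diag}(\boldsymbol{\lambda})\boldsymbol{s}(\boldsymbol{\lambda}))-\boldsymbol{s}(\boldsymbol{\lambda})$, the marginal externality a user on arc $a$ imposes on the others. The ``if'' direction then follows immediately: with this toll the variational inequalities are identical, so the equilibrium flow solves \eqref{soc.itso}. For ``only if'' I would argue contrapositively — if on some positively-used arc the toll differs from $\lambda_a s_a'(\lambda_a)$, the first-order conditions disagree there, and monotonicity of $s_a$ (hence essential uniqueness of the arc-flow solution) forces the equilibrium flow to differ from the social optimum.

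The step I expect to be the crux is twofold. First, establishing that \eqref{soc.itso} is genuinely convex: unlike the Beckmann integral, the social objective carries the product $\lambda_a s_a(\lambda_a)$, whose convexity must be extracted from the standing assumptions via $\tfrac{d^2}{d\lambda_a^2}(\lambda_a s_a)=2s_a'(\lambda_a)+\lambda_a s_a''(\lambda_a)\geq 0$, using that $s_a=\gamma\tau_a$ is convex, increasing and $\lambda_a\geq 0$; without this the variational-inequality characterization would not pin down a global optimum. Second, and conceptually more delicate, is the self-referential nature of the toll: $\boldsymbol{\vartheta}$ is written as a function of $\boldsymbol{\lambda}$, yet $\boldsymbol{\lambda}$ is itself the flow that emerges \emph{after} the toll is imposed. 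I would resolve this by evaluating the externality at the social-optimum flow $\boldsymbol{\lambda}^\star$, treating $\boldsymbol{\vartheta}$ as the constant vector $[\lambda_a^\star s_a'(\lambda_a^\star)]_a$, and then verifying the fixed-point consistency that $\boldsymbol{\lambda}^\star$ indeed satisfies the toll-augmented equilibrium conditions, so that the flow appearing in the toll formula coincides with the realized equilibrium flow.
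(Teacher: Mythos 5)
Your proposal is correct and follows essentially the same route as the paper: the paper also folds the linear electricity term $\mathbf{p}^T\mathbf{d}=(\mathbf{M}^T\mathbf{p})^T\boldsymbol{\lambda}$ into the modified cost $\boldsymbol{s}^e$ to recast \eqref{soc.itso} as a classical traffic assignment problem, and then simply cites the Wardrop/Beckmann marginal-cost-pricing result that you instead prove explicitly by matching the reduced gradients $\nabla(\mathrm{diag}(\boldsymbol{\lambda})\boldsymbol{s}(\boldsymbol{\lambda}))+\mathbf{M}^T\mathbf{p}$ and $\boldsymbol{s}(\boldsymbol{\lambda})+\mathbf{M}^T\mathbf{p}+\boldsymbol{\vartheta}$ in the common variational inequality. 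Your added checks (convexity of $\lambda_a s_a(\lambda_a)$ and the fixed-point consistency of evaluating the toll at $\boldsymbol{\lambda}^\star$) are exactly the details the paper delegates to the cited classical results.
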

\begin{proof}
Using the definition of the modified cost vector $\boldsymbol{s}^e(\boldsymbol{\lambda})$, \eqref{soc.itso} can be written as a classic traffic assignment problem:
\begin{align} \label{soc.itso2}
&\min_{\mathbf{f}_q } \boldsymbol{\lambda}^T \boldsymbol{s}^e(\boldsymbol{\lambda})  \\& \mbox{s.t.}~~ \mbox{Constraints $(\star)$ in \eqref{soc.itso}}\nonumber
\end{align}
The result then  follows from applying classic results on Wardrop's first and second  principles \cite{yildirim2001congestion} to the extended graph, acknowledging that  $\nabla(diag(\boldsymbol{\lambda})\mathbf b) = \mathbf b$.\end{proof}
%

\begin{remark} (Congestion Mark-up at Charging Stations) The arc toll $\theta_a$ on the virtual charging station entrance arcs  would correspond to a congestion mark-up  (plug-in fee) for all EVs stopping at each station. 
 This captures the user externality introduced by limited charging station capacity. The spots at FCSs located at busy streets and highways   or ones that allow a user to take  less congested routes are coveted by many drivers and thus  have higher plug-in fees.
\end{remark}

\section{Interactive Network Operation}\label{collab}
 For scalability reasons, the IPSO cannot be expected to consider detailed models of the transportation system demand flexibility when calculating  the prices $\mathbf p$.  However, we next  show that completely ignoring the interconnection between the two infrastructures (the status-quo) can have adverse effects on both infrastructures.  This motivates us to  introduce a collaborative pricing scheme using dual decomposition. The schemes we study for interactions between the IPSO and ITSO towards network operation  are highlighted in Fig. \ref{schemes}.

\subsection{Greedy pricing}\label{greedy}
Let us look at the scenario that would happen if no corrective action is taken in regards to how the grid is operated today and hence,  smart transportation and energy systems are  operated separately. In this disjoint model, the IPSO ignores the fact that the  load due to EV charge requests can move from one grid bus to another in response to posted prices. Instead, LMPs  are designed assuming that charge events will happen exactly as in the last period (this could be the previous day, the average of the previous month, etc.). On the other hand, the  ITSO  ignores the effect of EV charge requests on electricity prices, and takes electricity prices as a given when designing road and FCS congestion tolls.

{\bf Claim}: Under this greedy pricing scheme, the congestion and electricity prices $\boldsymbol \theta$ and $\mathbf p$ could oscillate indefinitely.  
 
We substantiate this claim through a numerical example in Section \ref{sec.numerical}. This, along with the loss of welfare experienced when our infrastructure is operated at a suboptimal state, motivates us to look into schemes which can allow the ITSO and IPSO to operate their networks optimally and reliably.

\subsection{Collaborative pricing}
\begin{proposition}\label{optdd}
An efficient market clearing LMP $\mathbf p$ can be posted through a ex-ante collaboration between the IPSO and ITSO following a dual decomposition algorithm.  
\end{proposition}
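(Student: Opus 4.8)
The plan is to realize the collaborative scheme as Lagrangian dual decomposition applied to the \emph{joint} social optimum that couples the ITSO's problem \eqref{soc.itso} and the IPSO's problem \eqref{soc.ipso}. First I would form the combined social objective, minimizing total congestion-plus-generation cost over the transportation path flows $\mathbf{f}_q$, the dispatch $\mathbf{g}$, and the charging load $\mathbf{d}$ simultaneously,
\begin{align}\label{so.joint}
\min_{\mathbf{f}_q,\,\mathbf{g},\,\mathbf{d}}~& \boldsymbol{\lambda}^T\boldsymbol{s}(\boldsymbol{\lambda}) + \mathbf{1}^T\boldsymbol{c}(\mathbf{g}) \\
\text{s.t.}~~& \mathbf{f}_q\succeq\mathbf{0},~ \mathbf{1}^T\mathbf{f}_q=m_q,~ \boldsymbol{\lambda}=\textstyle\sum_{q\in\mathcal{Q}}\mathbf{A}_q\mathbf{f}_q, \nonumber \\
& \mathbf{g}^{\min}\preceq\mathbf{g}\preceq\mathbf{g}^{\max},~ \mathbf{1}^T(\mathbf{d}+\mathbf{u}-\mathbf{g})=0, \nonumber \\
& \mathbf{H}(\mathbf{d}+\mathbf{u}-\mathbf{g})\preceq\mathbf{c},~ \mathbf{d}=\mathbf{M}\boldsymbol{\lambda}, \nonumber
\end{align}
observing that the two subsystems interact \emph{only} through the single linear relation $\mathbf{d}=\mathbf{M}\boldsymbol{\lambda}$. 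I would record that \eqref{so.joint} is convex: the dispatch term is convex because $\boldsymbol{c}$ is strongly convex and $\mathbf{d},\mathbf{u}$ enter the power constraints linearly, while the congestion term $\boldsymbol{\lambda}^T\boldsymbol{s}(\boldsymbol{\lambda})=\gamma\sum_a\lambda_a\tau_a(\lambda_a)$ is convex because $x\tau_a(x)$ is convex whenever $\tau_a$ is convex and increasing. Since at least one feasible $\mathbf{g}$ is assumed to exist for every load profile, Slater's condition and hence strong duality hold.

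Next I would dualize \emph{only} the coupling constraint. Assigning the multiplier $\mathbf{p}$ to $\mathbf{M}\boldsymbol{\lambda}-\mathbf{d}=\mathbf{0}$ yields the partial Lagrangian $\boldsymbol{\lambda}^T\boldsymbol{s}(\boldsymbol{\lambda})+\mathbf{p}^T\mathbf{M}\boldsymbol{\lambda}+\mathbf{1}^T\boldsymbol{c}(\mathbf{g})-\mathbf{p}^T\mathbf{d}$, whose minimization over the remaining, now decoupled, constraints splits the dual function as
\begin{equation}\label{dualsplit}
D(\mathbf{p}) = \min_{\mathbf{f}_q}\big[\,\boldsymbol{\lambda}^T\boldsymbol{s}(\boldsymbol{\lambda})+\mathbf{p}^T\mathbf{M}\boldsymbol{\lambda}\,\big] + \min_{\mathbf{g},\,\mathbf{d}}\big[\,\mathbf{1}^T\boldsymbol{c}(\mathbf{g})-\mathbf{p}^T\mathbf{d}\,\big],
\end{equation}
where the two minimizations are over the transportation constraints $(\star)$ and the power constraints respectively. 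The first bracket is exactly the ITSO problem \eqref{soc.itso} priced at $\mathbf{p}$ (using $\mathbf{p}^T\mathbf{M}\boldsymbol{\lambda}=\mathbf{p}^T\mathbf{d}$), and the second is the IPSO's dispatch augmented by a revenue term $-\mathbf{p}^T\mathbf{d}$ for serving load, the load ranging over the compact set carved out by the generation and line limits. I would then show that $\mathbf{p}$ is precisely the LMP: writing the stationarity of the second subproblem in $\mathbf{d}$, with $\gamma$ and $\boldsymbol{\mu}$ the multipliers of the balance and line-flow constraints, returns $\mathbf{p}=\gamma\mathbf{1}+\mathbf{H}^T\boldsymbol{\mu}$, which is exactly \eqref{lmpdef}.

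The dual-decomposition algorithm is then subgradient ascent on the concave dual $D(\mathbf{p})$: at each iterate the ITSO solves \eqref{soc.itso} and returns its induced load $\mathbf{M}\boldsymbol{\lambda}(\mathbf{p})$, the IPSO solves its dispatch and returns $\mathbf{d}(\mathbf{p})$, and the price is updated along the demand mismatch $\mathbf{p}\leftarrow\mathbf{p}+\alpha\big(\mathbf{M}\boldsymbol{\lambda}(\mathbf{p})-\mathbf{d}(\mathbf{p})\big)$, which is a subgradient of $D$. The only quantities crossing between the operators are the load vector $\mathbf{d}$ and the price $\mathbf{p}$, so each system keeps its costs and topology private, delivering the claimed ex-ante collaboration. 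By strong duality the dual optimum $\mathbf{p}^*$ is attained, the mismatch vanishes so the two loads agree at $\mathbf{d}^*$, and the recovered primal $(\boldsymbol{\lambda}^*,\mathbf{g}^*)$ solves \eqref{so.joint}; hence $\mathbf{p}^*$ is an efficient market-clearing LMP.

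I expect the main obstacle to be the convergence and primal-recovery step rather than the decomposition itself. The dual $D$ need not be differentiable: the dispatch subproblem depends on $\mathbf{d}$ only through the linear term $-\mathbf{p}^T\mathbf{d}$, so its minimizer $\mathbf{d}(\mathbf{p})$ may be set-valued, and the path-flow-to-arc-flow map $\mathbf{f}_q\mapsto\boldsymbol{\lambda}$ is many-to-one. One must therefore invoke convergence of a damped or diminishing-step (or ergodically averaged) subgradient scheme and extract a primal-feasible pair honoring the coupling. This damping is exactly what distinguishes the collaborative update from the undamped best-response of the greedy scheme, whose oscillation is the content of the \textbf{Claim} in Section \ref{greedy}; making that contrast precise is the crux of the argument.
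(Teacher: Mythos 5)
Your argument is correct and reaches the same conclusion, but it runs the dual decomposition along a genuinely different cut than the paper does. You keep $\mathbf{d}$ as a separate primal variable and relax the single coupling constraint $\mathbf{d} = \mathbf{M}\boldsymbol{\lambda}$, so the price $\mathbf{p}$ \emph{is} the multiplier of that constraint, the ITSO subproblem is \eqref{soc.itso} verbatim, the IPSO subproblem is a full network-constrained dispatch with a demand-revenue term $-\mathbf{p}^T\mathbf{d}$, and the LMP identity $\mathbf{p} = \gamma\mathbf{1} + \mathbf{H}^T\boldsymbol{\mu}$ is recovered a posteriori from stationarity in $\mathbf{d}$. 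The paper instead substitutes $\mathbf{d} = \mathbf{M}\boldsymbol{\lambda}$ into the power constraints and relaxes the balance and line-flow constraints themselves, taking $(\gamma, \boldsymbol{\mu})$ as the dual iterates: the price then has the LMP form of \eqref{priceiteration} at \emph{every} iteration by construction, the IPSO subproblem collapses to a separable box-constrained dispatch, and the update \eqref{updatedual} is a projected subgradient step on $(\gamma,\boldsymbol{\mu})$. Both schemes are valid (the joint problem is convex with linear constraints, so strong duality holds either way), and your version arguably gives the cleaner market-clearing reading. What the paper's choice buys is twofold: the iterates are always legitimately priced as LMPs, and the primal infeasibility along the trajectory lands exactly in the balance and line-flow constraints, which is the quantity bounded in \eqref{eq:primal_infeas} and consumed by the reserve-capacity analysis of the following subsection; in your scheme the violated quantity is $\mathbf{M}\boldsymbol{\lambda}^{(k)} - \mathbf{d}^{(k)}$ and an extra step would be needed to translate it into a balance/flow violation. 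Your closing caveats about nondifferentiability of the dual and primal recovery are well taken and apply equally to the paper's scheme (which simply invokes convergence for a small enough step size); the one technical point you should pin down in your route is boundedness of the IPSO subproblem in $\mathbf{d}$ for arbitrary $\mathbf{p}$, which requires the feasible net-injection set to be compact (or explicit bounds $\mathbf{0} \preceq \mathbf{d} \preceq \mathbf{d}_{\max}$), since $\mathbf{d}$ enters the objective only linearly.
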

\begin{proof}
A market clearing price is efficient (maximizes social welfare) if the  flow and generation  values  $\boldsymbol{\lambda}^*$ and $\mathbf g^*$  are the solution of:
\begin{align} \label{soc.joint}
&\min_{\mathbf{f}_q, \mathbf{g}} \boldsymbol{\lambda}^T \boldsymbol{s}(\boldsymbol{\lambda}) + \mathbf{1}^T \boldsymbol{c} (\mathbf{g})~~~~~~~~~~~~~~~~~~~~~~~~~~~~~~~~ \end{align} 
\vspace{-1.3cm}
\begin{multicols}{2}
\begin{equation}
\mbox{s.t.}
 \begin{cases} \mathbf{f}_q  \succeq \mathbf 0,\\ \mathbf{1}^T \mathbf{f}_q = m_{q}, \\ \boldsymbol \lambda = \sum_{q\in \mathcal{Q}}  \mathbf{A}_q  \mathbf{f}_q,  \end{cases} \nonumber
\end{equation}
\break
\begin{equation}
\begin{cases} \mathbf g^{\min} \preceq \mathbf g \preceq \mathbf g^{\max}, \\ \mathbf{1}^T  (\mathbf M  \boldsymbol{\lambda} + \mathbf u - \mathbf g)  =  0,\\ \mathbf H (\mathbf M  \boldsymbol{\lambda} + \mathbf u - \mathbf g)  \preceq \mathbf c.  \end{cases} \nonumber
\end{equation}
\end{multicols}
%

\begin{figure}[t]
\centering
\includegraphics[width = 0.9 \linewidth]{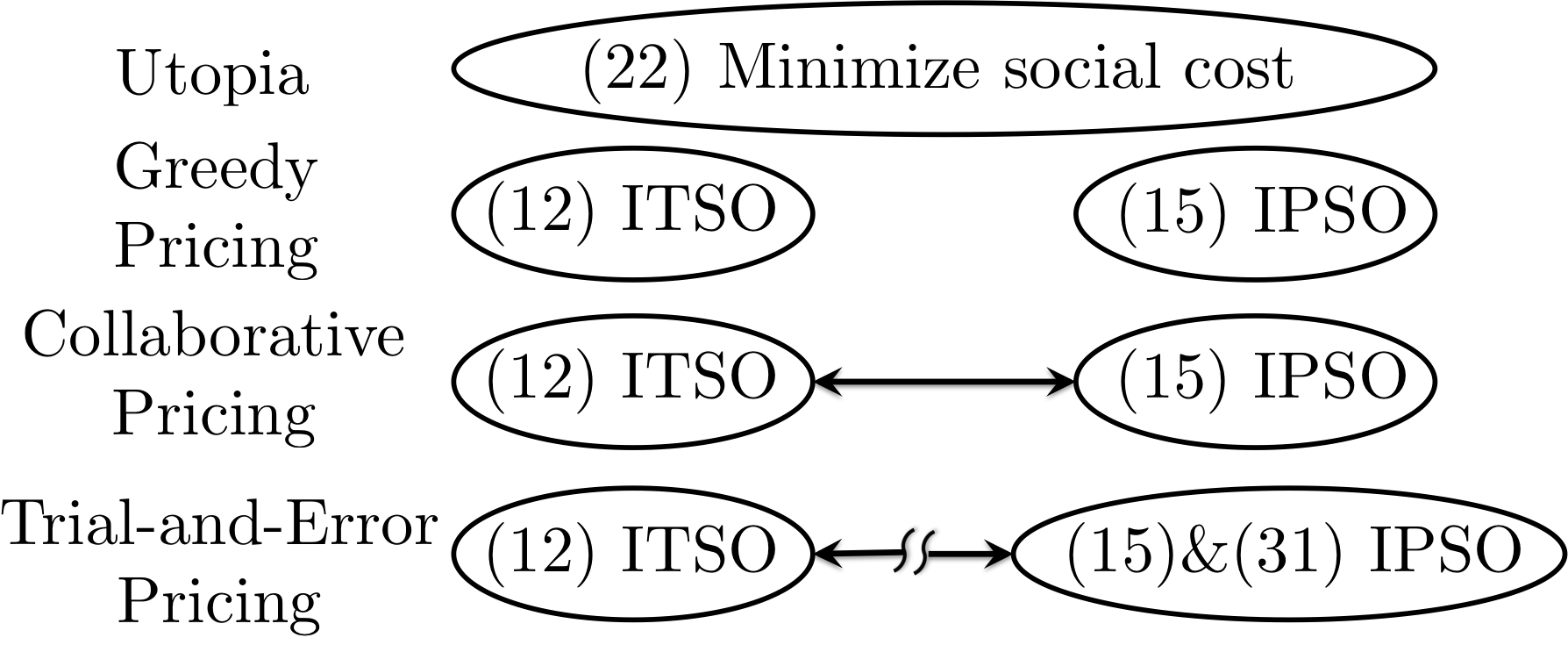}
\caption{The different network operation schemes studied} 
\label{schemes}
\end{figure}

The last two constraints contain both decision variables and  couple the IPSO and ITSO optimization problems. Let us introduce Lagrange multipliers $\gamma$ and $\boldsymbol \mu$ respectively for the balance and line flow constraints. The partial Lagrangian of   \eqref{soc.joint} considering only the coupling constraints is:
\begin{align}  L(\mathbf{f}_q|_{q\in \mathcal Q}, \mathbf{g}, \gamma, \boldsymbol \mu) &\!=\!  \boldsymbol{\lambda}^T \boldsymbol{s}(\boldsymbol{\lambda}) \!+\! \mathbf{1}^T \boldsymbol{c} (\mathbf{g})   \!+\! \gamma \mathbf{1}^T  (\mathbf M  \boldsymbol{\lambda} + \mathbf u - \mathbf g) \nonumber\\& +\boldsymbol \mu^T ( \mathbf H (\mathbf M  \boldsymbol{\lambda} + \mathbf u - \mathbf g)  - c) \end{align}
with $\boldsymbol \mu  \succeq \mathbf 0$. Since $  L(\mathbf{f}_q|_{q\in \mathcal Q}, \mathbf{g}, \gamma, \boldsymbol \mu)$ is separable, we can minimize over $\mathbf{f}_q|_{q\in \mathcal Q}$ and $\mathbf{g}$ in two separate subproblems,  allowing us to use  standard dual decomposition with projected subgradient methods to find the optimal price. Consider a sequence $\{\gamma^{(k)} \}$ and $\{\boldsymbol \mu^{(k)}\}$ of Lagrange multipliers generated by the iterative decomposition scheme. Then, at the $k$-th iteration,   the ITSO solves for the optimal extended graph flow $\boldsymbol{ \lambda}^{(k)}$ through a subproblem that has the same structure as \eqref{soc.itso}, with electricity prices at iteration $k$, $\mathbf p^{(k)}$, set as:
\begin{align}  \label{priceiteration}
&\mathbf p^{(k)}  =  \gamma^{(k)}  \mathbf{1} +  \mathbf H^T \boldsymbol \mu^{(k)}.
\end{align}
On the other hand, the subproblem solved by the IPSO is:
\begin{align}  
 \mathbf g^{(k)}  =~ & \mbox{argmin}_{  \mathbf{g}}   \mathbf{1}^T \boldsymbol{c} (\mathbf{g})  -  (\gamma^{(k)}  \mathbf{1} +  \mathbf H^T \boldsymbol \mu^{(k)})^T \mathbf{g}  \\& \mbox{s.t.}~~  \mathbf g^{\min} \preceq \mathbf g \preceq \mathbf g^{\max}. \nonumber
\end{align}
The IPSO then updates the balance and congestion components of the LMP, i.e., $\gamma^{(k)} $ and $\boldsymbol \mu^{(k)}$,  through:
\begin{equation}\label{updatedual}
\left( \begin{array}{c}
\gamma^{(k+1)}\\
\boldsymbol \mu^{(k+1)}
\end{array} \right) \!=\! \left( \!\!\!\begin{array}{c}
\gamma^{(k)} + \alpha_k ( \mathbf{1}^T  (\mathbf M  \boldsymbol{\lambda}^{(k)} + \mathbf u - \mathbf g^{(k)})) \\
\{\boldsymbol \mu^{(k)} + \alpha_k ( \mathbf H (\mathbf M  \boldsymbol{\lambda}^{(k)} + \mathbf u - \mathbf g^{(k)})  - c)\}^+
\end{array} \!\!\!\!\right) 
\end{equation}

It is shown that with a small enough step size, the dual decomposition method converges to the solution of \eqref{soc.joint} \cite{shor1985minimization}. Hence, if the electricity price  $\bf p$ is   $\gamma^{\star}  \mathbf{1} +  \mathbf H^T \boldsymbol \mu^{\star} = \lim_{k \rightarrow \infty} \gamma^{(k)}  \mathbf{1} +  \mathbf H^T \boldsymbol \mu^{(k)}$, the market clears and the generator outputs and system flow will be equal to $\mathbf g^*$ and $\boldsymbol{\lambda}^*$.
\end{proof}

\vspace{-.4cm}
\subsection{Optimal reserve capacity for trial-and-error pricing}
In theory, the above algorithm can eliminate the need for the existence of an ex-ante\footnote{The term ex-ante refers to actions that are adjusted as a result of forecasting user behavior and not actual observations, while ex-post refers to actions that are based on actual observations rather than forecasts. } ITSO collaboration for calculating electricity prices. Instead, imagine that the IPSO can actually post electricity prices according to  \eqref{updatedual} and  gradually find the optimal market clearing LMP by observing the charging demand of the actual transportation system\footnote{Note that this is only possible if the time-scale at which the network flow $\boldsymbol \lambda$ reaches its new equilibrium in response to  new posted prices ${\bf p}$ and tolls $\boldsymbol \theta$ is much smaller that the time-scale at which electricity costs or travel demands change.}. When dealing with unknown demand functions in commodity pricing, this is referred to as {\it the trial-and-error approach}, see, e.g., \cite{yang2010road}, for prior use of such approaches in toll design.

Implementing this approach has two requirements:

1) The IPSO should be willing to move away from the greedy pricing scheme  in order to eventually maximize  societal welfare (even though the extra welfare generated might not be easily quantifiable and the  operating point might not correspond to minimum generation costs);

2) More importantly, primal feasibility is most likely violated when using Lagrangian relaxation to handle coupling constraints in \eqref{soc.joint}. This means that when posting prices according to \eqref{updatedual}, the IPSO should expect the balance and flow constraints to be violated in order for the algorithm to converge, and plan accordingly. In power grids, any unpredictable violation of reliable system operation is referred to as a contingency (a threat to the security of the system) and is handled through {\it generation reserves}.

\begin{definition}{(Reserves)}  In power grids, a generation reserve capacity of ${\bf r} = [r_v]_{v\in \mathcal B}$ allows the IPSO to compensate for any demand-supply imbalances after market clearing {\blue as long as $y_v \in [-r_v,r_v]$ , or equivalently $-\mathbf r \preceq \mathbf y \preceq \mathbf r$}. This is typically done by adjusting the output of an already online generator either upward or downward. Given a reserve capacity of $\mathbf r$, the balance equation will become:
\begin{equation} \label{balancecons}
\mathbf{1}^T  (\mathbf d + \mathbf u - \mathbf g - \mathbf y) = 0,~~-\mathbf r \preceq \mathbf  y \preceq \mathbf r \end{equation}
where $\mathbf y$ can be chosen at the IPSO's discretion after observing the demand $\mathbf d$.
 The reserve capacity $\mathbf r$ should be procured in advance.
\end{definition}
Note that the dispatched reserve generation $ \mathbf y$ affects the line flows and hence the flow constraint becomes:
\begin{equation} \label{flowcons}
 \mathbf H (\boldsymbol{d} + \mathbf u - \mathbf g - \mathbf y)  \preceq \mathbf c.
\end{equation}

Here we will use 
bounds on primal infeasibility to determine the reserve capacity $\mathbf r$ that needs to be procured by the IPSO in this type of ex-post LMP adjustment. {\color{black} For simplicity, we consider a constant step size rule such that $\alpha_k = \alpha$ for all $k$.} Note that in this scheme, after each price adjustment iteration $\boldsymbol{p}^{(k)}$, the approximate primal solutions, which are the last iterate $\boldsymbol{ \lambda}^{(k)}$ and $\boldsymbol{ g}^{(k)}$, are actually implemented. Assume that the IPSO knows that during the $k$-th iteration, 
\begin{equation}\label{bound}
\left( \begin{array}{c}
|\mathbf{1}^T  ( \boldsymbol{d}^{(k)} + \mathbf u - \mathbf g^{(k)})| \\
   \mathbf H (\boldsymbol{d}^{(k)} + \mathbf u - \mathbf g^{(k)})  - \mathbf c
\end{array} \right)  \preceq  \left( \begin{array}{c}
a_k \\  \mathbf w_k
\end{array} \right),
\end{equation}
where $d^{(k)} = \mathbf M  \boldsymbol{\lambda}^{(k)}$.
For dual first order algorithms such as dual gradient and dual fast gradient methods (any of which can be employed by the IPSO for price update), such bounds were recently provided by \cite{necora}. For example, for dual gradient methods, one possible bound is given by:
 \begin{equation} \label{eq:primal_infeas}
   \left( \begin{array}{c}
a_k \\  \mathbf w_k
\end{array} \right) =   \frac{3 \left\|  \left( \begin{array}{c}
\gamma^{(0)} \\  \boldsymbol{\mu}^{(0)}
\end{array} \right) -  \left( \begin{array}{c}
\gamma^{\star} \\  \boldsymbol{\mu}^{\star} 
\end{array} \right) \right\|_2 }{\alpha \sqrt{k}} \mathbf{1},
\end{equation}
{\newblue where $\alpha = \alpha_k \leq 1/L_d, \forall k$ and $L_d$ is the Lipschitz constant for the dual problem of \eqref{soc.joint}. We now need to show that the bound in \eqref{eq:primal_infeas} is well-defined.
\begin{lemma}
The dual problem of \eqref{soc.joint} has finite Lipschitz constant  and bounded optimal dual variables, i.e., $L_d < \infty$ and
$\| (\gamma^\star, \bm{\mu}^\star) \| < \infty$. 
\end{lemma}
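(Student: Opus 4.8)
The plan is to treat the two assertions of the lemma separately, since each rests on a distinct piece of convex-duality machinery: for $L_d < \infty$ I would invoke the classical fact that strong convexity of the primal objective implies Lipschitz smoothness of the dual, and for $\|(\gamma^\star,\boldsymbol\mu^\star)\| < \infty$ I would use a Slater-type constraint qualification together with compactness of the primal feasible set.

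For the smoothness bound I would first rewrite the two coupling constraints of \eqref{soc.joint}, the scalar balance equality and the vector line-flow inequality, in the compact form $\mathbf B(\mathbf M\boldsymbol\lambda + \mathbf u - \mathbf g)$ with $\mathbf B = [\,\mathbf 1^T;\,\mathbf H\,]$, so that the constraint map is linear in the primal variables $(\mathbf f_q,\mathbf g)$. Since the partial Lagrangian $L$ is separable, the dual $q(\gamma,\boldsymbol\mu) = \min_{(\mathbf f_q,\mathbf g)} L$ splits into an ITSO flow subproblem (of the same structure as \eqref{soc.itso}) and an IPSO generation subproblem. The standard duality estimate then states that if the primal objective is $\sigma$-strongly convex, $q$ has a Lipschitz-continuous gradient with $L_d \le \|\mathbf B\|_2^2/\sigma$. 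The task therefore reduces to exhibiting a modulus $\sigma > 0$ and observing that $\mathbf B$ is a fixed finite-dimensional matrix assembled from $\mathbf 1$, $\mathbf H$ and $\mathbf M$, so that $\|\mathbf B\|_2 < \infty$ on the finite graph.

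Producing $\sigma > 0$ is where I expect the real work to be. The generation term $\mathbf 1^T\boldsymbol c(\mathbf g) = \sum_v c_v(g_v)$ is strongly convex by the standing assumption on each $c_v$, so the generation block yields a Lipschitz minimizer $\mathbf g^\star(\gamma,\boldsymbol\mu)$ and hence a Lipschitz component of $\nabla q$. The delicate part is the congestion term $\boldsymbol\lambda^T\boldsymbol s(\boldsymbol\lambda) = \sum_a \lambda_a s_a(\lambda_a)$: on the physical and charging-station entrance arcs the latency is increasing, so each term is strongly convex, but on the virtual charging arcs $\mathcal L_v$ the travel-time cost is the constant $\gamma e_a/\rho_v$, which makes that term merely linear in $\lambda_a$. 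Because $\mathbf M$ reads off precisely the charging-arc flows, these are exactly the directions entering the coupling constraints, so a naive arc-space estimate of $\sigma$ fails. I would resolve this either by assuming the latency functions are strictly increasing with slope bounded below and, if needed, adding an arbitrarily small regularizer $\tfrac{\epsilon}{2}\|\boldsymbol\lambda\|_2^2$ that leaves the economics intact, or by exploiting flow conservation, since a charging-arc flow equals the flow on its series-connected congestible entrance arc and thereby inherits the latter's strong convexity. This is the main obstacle and the step I would treat with the most care.

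For the boundedness of the optimal dual variables I would first note that the primal feasible set is compact: the constraints $\mathbf 1^T\mathbf f_q = m_q$ with $\mathbf f_q \succeq \mathbf 0$ confine each path-flow vector to a scaled simplex and hence bound $\boldsymbol\lambda = \sum_q \mathbf A_q\mathbf f_q$, while $\mathbf g$ lies in the box $[\mathbf g^{\min},\mathbf g^{\max}]$. Combined with the standing assumption that a feasible generation mix exists for every load profile, strengthened to a point that is strictly feasible for the line-flow inequalities, this gives a valid Slater condition, and I would then invoke the classical consequence that the dual optimal set is nonempty, convex and compact, with the explicit bound $\|\boldsymbol\mu^\star\|_1 \le (f(\bar{\mathbf x}) - p^\star)/\beta$, where $\bar{\mathbf x}$ is the Slater point, $\beta > 0$ its minimum inequality slack, and $p^\star$ the optimal value. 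Boundedness of the scalar balance multiplier $\gamma^\star$ then follows from the stationarity relation $\mathbf p = \gamma\mathbf 1 + \mathbf H^T\boldsymbol\mu$ in \eqref{lmpdef}: the price $\mathbf p$ is a marginal generation cost evaluated on the compact box and is therefore bounded, and with $\boldsymbol\mu^\star$ already bounded, $\gamma^\star$ is bounded as well. Putting the two parts together yields $L_d < \infty$ and $\|(\gamma^\star,\boldsymbol\mu^\star)\| < \infty$, so that the bound in \eqref{eq:primal_infeas} is well-defined.
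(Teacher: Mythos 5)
Your strategy coincides with the paper's: $L_d<\infty$ is obtained from strong convexity of the primal objective of \eqref{soc.joint} (the paper simply cites \cite{necora} for the implication, whereas you make the constant $\|\mathbf B\|_2^2/\sigma$ explicit), and boundedness of $(\gamma^\star,\bm{\mu}^\star)$ is obtained from strong duality. Two differences are worth recording. First, your worry about the modulus $\sigma$ is well placed and is in fact a gap the paper glosses over: its footnote asserts that $\lambda_a s_a(\lambda_a)$ is strongly convex because $s_a$ is non-negative, convex and increasing (cf.~\eqref{incon}), but on the virtual charging arcs $a\in\mathcal L_v$ the cost $s_a=\gamma e_a/\rho_v$ is constant (and zero on bypass arcs), so $\lambda_a s_a(\lambda_a)$ is merely linear there; since $\mathbf M$ reads off exactly these flows, the coupling constraints do probe the non--strongly-convex directions. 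Your first remedy (a lower bound on latency slopes, or an $\tfrac{\epsilon}{2}\|\bm{\lambda}\|_2^2$ regularizer) is the honest fix; your second remedy does not work as stated, because the entrance-arc flow equals the \emph{sum} of the parallel charging-arc flows in $\mathcal L_v$ while $\mathbf M$ forms the differently weighted sum $\sum_{a\in\mathcal L_v}e_a\lambda_a$, so strong convexity in the entrance flow does not propagate to the individual charging arcs. Second, for dual boundedness the paper uses a cheaper argument than yours: all constraints of \eqref{soc.joint} are affine and the optimal value is finite, so the refined Slater condition yields strong duality and dual attainment without a strictly feasible point; your route through a strict Slater point and $\|\bm{\mu}^\star\|_1\le(f(\bar{\mathbf x})-p^\star)/\beta$ needs the extra assumption that the line-flow limits can be met with slack, but in exchange produces a quantitative bound, which is what is actually needed downstream in \eqref{eq:primal_infeas} (the paper instead resorts to an MPEC in the appendix for that estimate). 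Your bound on $\gamma^\star$ via \eqref{lmpdef} is fine.
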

\begin{proof}
The finiteness of $L_d$ is a consequence of the strong convexity of the objective function\footnote{Recall that $s_a(\lambda_a)$ is non-negative, convex and increasing (cf.~\eqref{incon}), therefore the product $\lambda_a s_a(\lambda_a)$ is strongly convex.} of \eqref{soc.joint} \cite{necora}. 
Furthermore, we see that \eqref{soc.joint} is a convex problem with linear inequality constraints and its optimal 
objective value is finite (as we have assumed that at least one feasible generation mix exists for every possible load profile). Consequently, strong duality holds for \eqref{soc.joint} and there exists a set of  
bounded optimal dual variables. 
\end{proof}}

To calculate \eqref{eq:primal_infeas}, the IPSO has to access to the set of {\it potentially optimal} energy and congestion prices $\gamma$ and $\boldsymbol \mu$. 
Without access to travel patterns,  an estimate of the upper bound to $\| ( \gamma^\star, \bm{\mu}^\star ) \|_2$ can be evaluated using many methods. For example, we can calculate these bounds by a) performing Monte-Carlo simulations by setting ${\bf d}$ to different values between its upper and lower bounds given the limited capacity of charging stations (charging stations  operating anywhere until full capacity) \cite{5130234}; b) studying critical load levels as suggested in \cite{li2009congestion}; c) solving a Mathematical Program with Equilibrium Constraints (MPEC) \cite{7192738}. {\newblue We chose to solve an MPEC to find this bound; see Appendix~\ref{app:mpec}.}


 Given \eqref{bound}, the IPSO needs to ensure that \eqref{balancecons} and \eqref{flowcons} hold by appropriately choosing a reserve capacity $\mathbf r_k$ to be procured for each future $k$ to ensure system security.

\begin{proposition} (Security-Constrained Ex-post Price Adjustments) Given unit reserve capacity prices $\boldsymbol \xi_k$ at each node of the power grid for iteration $k$, the optimal reserve capacity $\mathbf r^\star_k$ to be procured at different grid buses for iteration $k$ of the price adjustment algorithm is given by:
 \begin{align}\label{robustcap}
&  {\bf r}_k^\star = \arg \min_{ {\bf r}_k  }  \bm{\xi}_k^T {\bf r}_k \\
{\rm s.t.} & \max_{\bm{\eta}^j, \forall j,  \boldsymbol{\theta}^i, \forall i} \!\! -  \theta_1^i \mathbf 1^T \bm{\eta}^j +    (\mathbf H \bm{\eta}^j  - \mathbf c)^T \boldsymbol{\theta}^i_2  -  \mathbf r_k^T ( \boldsymbol{\theta}^i_3 +  \boldsymbol{\theta}^i_4) \leq 0,
\nonumber\end{align}
where the constraint is a piecewise-linear function of ${\bf r}_k $, and the numbers $(  \boldsymbol{\theta}^i, \bm{\eta}^j), i = 1,\ldots, I, j = 1,\ldots, J$ are given.
\end{proposition}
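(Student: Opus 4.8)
The plan is to read \eqref{robustcap} as the optimal value of a \emph{robust} reserve-sizing problem and to derive its single constraint from a theorem of alternatives. Write $\mathbf b = \mathbf d^{(k)} + \mathbf u - \mathbf g^{(k)}$ for the post-market net imbalance at iteration $k$. The bound \eqref{bound} tells the IPSO only that $\mathbf b$ lies in the polyhedral uncertainty set $\mathcal U_k = \{\mathbf b : |\mathbf 1^T \mathbf b| \le a_k,\ \mathbf H \mathbf b - \mathbf c \preceq \mathbf w_k\}$; the actual value is revealed only after prices are posted. A reserve capacity $\mathbf r_k$ is therefore \emph{admissible} iff for every $\mathbf b \in \mathcal U_k$ the IPSO can find a dispatch $\mathbf y \in [-\mathbf r_k, \mathbf r_k]$ restoring \eqref{balancecons} and \eqref{flowcons}, i.e. $\mathbf 1^T \mathbf y = \mathbf 1^T \mathbf b$ and $\mathbf H \mathbf y \succeq \mathbf H \mathbf b - \mathbf c$. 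First I would cast the objective as $\min \boldsymbol{\xi}_k^T \mathbf r_k$ over admissible $\mathbf r_k$; the content of the proposition is to turn the semi-infinite admissibility condition into the piecewise-linear inequality displayed.

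For fixed $\mathbf b$ and $\mathbf r_k$ the existence of such a $\mathbf y$ is a linear feasibility problem in $\mathbf y$, with one free equality (balance), the flow inequalities $-\mathbf H \mathbf y \preceq \mathbf c - \mathbf H \mathbf b$, and the box inequalities $\mathbf y \preceq \mathbf r_k$, $-\mathbf y \preceq \mathbf r_k$. The next step is Farkas' lemma: this system is infeasible iff there exist multipliers $\theta_1$ (free) and $\boldsymbol{\theta}_2, \boldsymbol{\theta}_3, \boldsymbol{\theta}_4 \succeq \mathbf 0$ in the polyhedral cone $C = \{(\theta_1,\boldsymbol{\theta}_2,\boldsymbol{\theta}_3,\boldsymbol{\theta}_4) : \theta_1 \mathbf 1 - \mathbf H^T \boldsymbol{\theta}_2 + \boldsymbol{\theta}_3 - \boldsymbol{\theta}_4 = \mathbf 0\}$ whose certificate value $\theta_1 \mathbf 1^T \mathbf b + \boldsymbol{\theta}_2^T(\mathbf c - \mathbf H \mathbf b) + (\boldsymbol{\theta}_3 + \boldsymbol{\theta}_4)^T \mathbf r_k$ is negative. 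Hence feasibility for a given $\mathbf b$ is exactly the requirement that $-\theta_1 \mathbf 1^T \mathbf b + (\mathbf H \mathbf b - \mathbf c)^T \boldsymbol{\theta}_2 - (\boldsymbol{\theta}_3 + \boldsymbol{\theta}_4)^T \mathbf r_k \le 0$ for every point of $C$, and admissibility of $\mathbf r_k$ is this inequality holding simultaneously for all $\mathbf b \in \mathcal U_k$.

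It remains to reduce this doubly-universal quantification to a finite maximum. The functional is linear in $\mathbf b$ for fixed duals, so its maximum over the polytope $\mathcal U_k$ is attained at one of the finitely many vertices $\bm{\eta}^j$; and, because $C$ is a cone on which the functional must stay nonpositive, it suffices to test the finitely many normalized extreme rays $\boldsymbol{\theta}^i = (\theta_1^i, \boldsymbol{\theta}_2^i, \boldsymbol{\theta}_3^i, \boldsymbol{\theta}_4^i)$ of $C$. Substituting $\mathbf b = \bm{\eta}^j$ and the duals $\boldsymbol{\theta}^i$ yields precisely the inner $\max_{i,j}$ of \eqref{robustcap}, each term of which is affine in $\mathbf r_k$ (the $\mathbf b$-terms are constants and $-\mathbf r_k^T(\boldsymbol{\theta}_3^i + \boldsymbol{\theta}_4^i)$ is linear); the maximum is therefore the convex piecewise-linear function of $\mathbf r_k$ claimed, and minimizing $\boldsymbol{\xi}_k^T \mathbf r_k$ subject to it gives \eqref{robustcap}.

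I expect the main obstacle to be the homogeneity of the certificate cone $C$: since $C$ is scale-invariant, the naive worst-case ``$\sup$ over $C$'' is either $0$ or $+\infty$, so the finite reformulation is legitimate only after passing to extreme rays and arguing that the $\mathbf r_k$-dependent term keeps each ray's value bounded above by $0$. A related gap is verifying that $\mathcal U_k$ is bounded (a true polytope) so that its maximizers are vertices; if $\mathcal U_k$ has recession directions—for instance from the null space of $\mathbf H$, since $\mathbf H \mathbf 1 = \mathbf 0$—one must either show those directions never increase the functional or append the corresponding extreme-ray constraints to the list indexed by $j$. Handling these degeneracies, together with the offline enumeration of the pairs $(\boldsymbol{\theta}^i, \bm{\eta}^j)$ that the statement takes as given, is where the real care lies; the Farkas/duality step itself is routine.
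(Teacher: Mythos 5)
Your proposal follows essentially the same route as the paper: cast reserve sizing as a robust LP over the polyhedral imbalance uncertainty set, certify recourse feasibility for each imbalance via a Farkas/duality argument (the paper equivalently observes that the dual of the feasibility LP $Q(\bm{\eta},\mathbf r_k)$ is bounded at zero and uses $\mathbf 0 \in \mathcal T$), and reduce the resulting bilinear maximization over a product of polyhedra to a finite enumeration, giving the convex piecewise-linear constraint in $\mathbf r_k$. The two caveats you flag are resolved consistently with the paper: the uncertainty set is a bona fide polytope because $\mathcal N$ explicitly includes the bounds $\bm{\eta}_{min} \preceq \bm{\eta} \preceq \bm{\eta}_{max}$ (which your $\mathcal U_k$ omits), and your extreme-ray treatment of the homogeneous multiplier cone is the correct reading of what the paper loosely calls the ``extreme points'' of $\mathcal T$.
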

\begin{proof} The optimal reserve capacity $\mathbf r_k$ is equal to the cheapest possible nodal reserve capacity combination that can restore the network balance and flow constraints under {\it any} possible amount of feasibility violation specified in \eqref{bound}, i.e., we have the following robust optimization problem:
 \begin{align}
& \mathbf r^*_k =   \mbox{argmin}_{\mathbf r_k} \boldsymbol \xi_k^T \mathbf r_k \label{robust1} \\
 &{\rm s.t.}  \forall~\bm{\eta} \in {\cal N}, \exists \mathbf y: |\mathbf y| \preceq \mathbf r_k: \!\begin{cases} {\bf 1}^T (\bm{\eta} - {\bf y}) = 0 \\
{\bf H} (\bm{\eta} - {\bf y}) \preceq {\bf c}  \end{cases} \nonumber
\end{align}
where $\bm{\eta} = {\bf d} + {\bf u} - {\bf g}$ and 
\[ \begin{array}{l} {\cal N} = 
\{ \bm{\eta} : |{\bf 1}^T \bm{\eta}| \leq a_k,~{\bf H}\bm{\eta} - {\bf c} \preceq {\bf w}_k, \bm{\eta}_{min} \preceq \bm{\eta} \preceq \bm{\eta}_{max} \},\end{array} \]
where $\bm{\eta}_{min}$ and $\bm{\eta}_{max}$ denotes the minimum/maximum possible $\bm{\eta}$. 
 Problem \eqref{robust1} is equivalent to:
 \begin{align} \label{reservecap}
& \mathbf r^*_k =   \mbox{argmin}_{\mathbf r_k} \boldsymbol \xi_k^T \mathbf r_k \\
&\mbox{s.t.}~  \forall~\bm{\eta} \in {\cal N}: \mbox{$Q(\bm{\eta},\mathbf r_k)$ is feasible} \nonumber 
\end{align}
where \begin{align}Q(\bm{\eta},\mathbf r_k) =   &\min_{\mathbf y}~ 0\\ \mbox{s.t.}~ &\mathbf 1^T \mathbf y = \mathbf 1^T \bm{\eta}\nonumber,~ -\mathbf H \mathbf y \preceq \mathbf c - \mathbf H \bm{\eta} \nonumber \\&- \mathbf r_k \preceq \mathbf y  \preceq \mathbf r_k  \nonumber\end{align}
If $Q(\bm{\eta},\mathbf r_k)$ is feasible, its dual problem $Q^*(\bm{\eta},\bf r_k)$ is bounded and the dual optimum will be 0. Thus, we can  write \eqref{reservecap} as:
 \begin{align} \label{reservecap2}
& \mathbf r^*_k =   \mbox{argmin}_{\mathbf r_k} \boldsymbol \xi_k^T \mathbf r_k \\
&\mbox{s.t.}~ \forall~\bm{\eta} \in {\cal N}:  \max_{\boldsymbol{\theta} \in \mathcal T}  F(\bm{\eta},\boldsymbol{\theta},\bf r_k) = 0, \nonumber
\end{align}
where
\begin{align}\mathcal T =  \{&  \boldsymbol{\theta} = (\theta_1, \boldsymbol{\theta}_2,  \boldsymbol{\theta}_3,  \boldsymbol{\theta}_4) | \boldsymbol{\theta}_2,  \boldsymbol{\theta}_3,  \boldsymbol{\theta}_4 \succeq  \boldsymbol{0},  \\ &\theta_1 \mathbf 1 - \mathbf H^T  \boldsymbol{\theta}_2 -  \boldsymbol{\theta}_3 +  \boldsymbol{\theta}_4 = \mathbf 0 \}, \nonumber \\ F(\bm{\eta},\boldsymbol{\theta},\bf r_k) = &-  \theta_1 \mathbf 1^T \bm{\eta} +    (\mathbf H \bm{\eta}  - \mathbf c)^T \boldsymbol{\theta}_2 -  \mathbf r_k^T ( \boldsymbol{\theta}_3 +  \boldsymbol{\theta}_4). \nonumber\end{align}
Since $\mathbf 0 \in \mathcal T$, this is equivalent to:
 \begin{align} \label{reservecap3}
&\mathbf r^*_k =   \mbox{argmin}_{\mathbf r_k} \boldsymbol \xi_k^T \mathbf r_k \\
&\mbox{s.t.}~  \max_{ \bm{\eta} \in {\cal N},  \boldsymbol{\theta} \in \mathcal T} F(\bm{\eta},\boldsymbol{\theta},\bf r_k) \leq 0.\nonumber
\end{align}
Note that $F(\bm{\eta},\boldsymbol{\theta},\bf r_k)$ is neither convex  nor concave in $\bm{\eta}$ and $  \boldsymbol{\theta}$ (bilinear). The constraint set $\bm{\eta} \in {\cal N},  \boldsymbol{\theta} \in \mathcal T$ is a polyhedron and hence, the  optimal solution of $\max_{ \bm{\eta} \in {\cal N},  \boldsymbol{\theta} \in \mathcal T} F(\bm{\eta},\boldsymbol{\theta},\bf r_k)$ is one of the extreme points $(  \boldsymbol{\theta}^i, \bm{\eta}^j), i = 1,\ldots, I, j = 1,\ldots, J$ of the polyhedrons ${\cal T}$ and ${\cal N}$. This shows that the constraint is a  convex piecewise linear function
in ${\bf r}_k$.
\end{proof}

\begin{remark}
In general, we have no knowledge of the
extreme points of ${\cal N}$ and ${\cal T}$, and computing $ \max_{ \bm{\eta} \in {\cal N},  \boldsymbol{\theta} \in \mathcal T} F(\bm{\eta},\boldsymbol{\theta},\bf r_k)$ is non-trivial. Hence, proper approximation algorithms need to be studied for solving \eqref{robustcap}. However, this is out of the scope of this paper. See \cite{bertsimas2013adaptive} for the treatment of a somewhat similar problem, where the use of an outer approximation algorithm is proposed.  Instead, in our numerical results, we resort to a sample/scenario-approximation method \cite[Chapter 2.6]{bental}. For example, we replace the set ${\cal T} \times {\cal N}$ by a finite set $\{ ( \bm{\theta}_i,\bm{\eta}_i),~i=1,...,N_s \} \subseteq {\cal T} \times {\cal N}$. In this case, \eqref{reservecap3} will be turned into a convex program with a finite number of  linear inequalities.
\end{remark}

\section{Numerical Examples}\label{sec.numerical}

This section investigates the need for the joint  EV management scheme we propose through numerical analysis of system performance. We focus on the system level optimization. We assume that charging stations are publicly owned infrastructure for the sake of simplicity. This means that we assume electricity is sold at wholesale prices to EV drivers\footnote{In reality, EV drivers may purchase flat rate charging services from for-profit entities that trade with the wholesale market and can use appropriate economic  incentives (similar to the tolls discussed in this paper) and recommendation systems to guide the customers towards optimal  stations. This is beyond the scope of this work.}.

The transportation network ${\cal G}$ is shown in Fig.~\ref{fig:transport_graph}. For each arc (road section), we define the latency function as:
\begin{equation}
\tau_a(\lambda_a) = T_a + \lambda_a / 10^4,
\end{equation}
where $T_a$ is the minimum time required to travel through arc $a$. We set $\gamma = \$10^{-3} / {\rm minute}$ for the cost spent en route. Note that this might seem like a rather  low number but it would be scaled up by a factor of 10 if electricity is traded at retail prices instead of wholesale. The power network ${\cal R}$ is modeled using the line and generation cost parameters of the IEEE 9-bus test case, except  that several more buses are modeled as load buses where EVs can charge; also see Fig.~\ref{fig:transport_graph}. 

\begin{figure}[t]
\centering
\includegraphics[width = \linewidth]{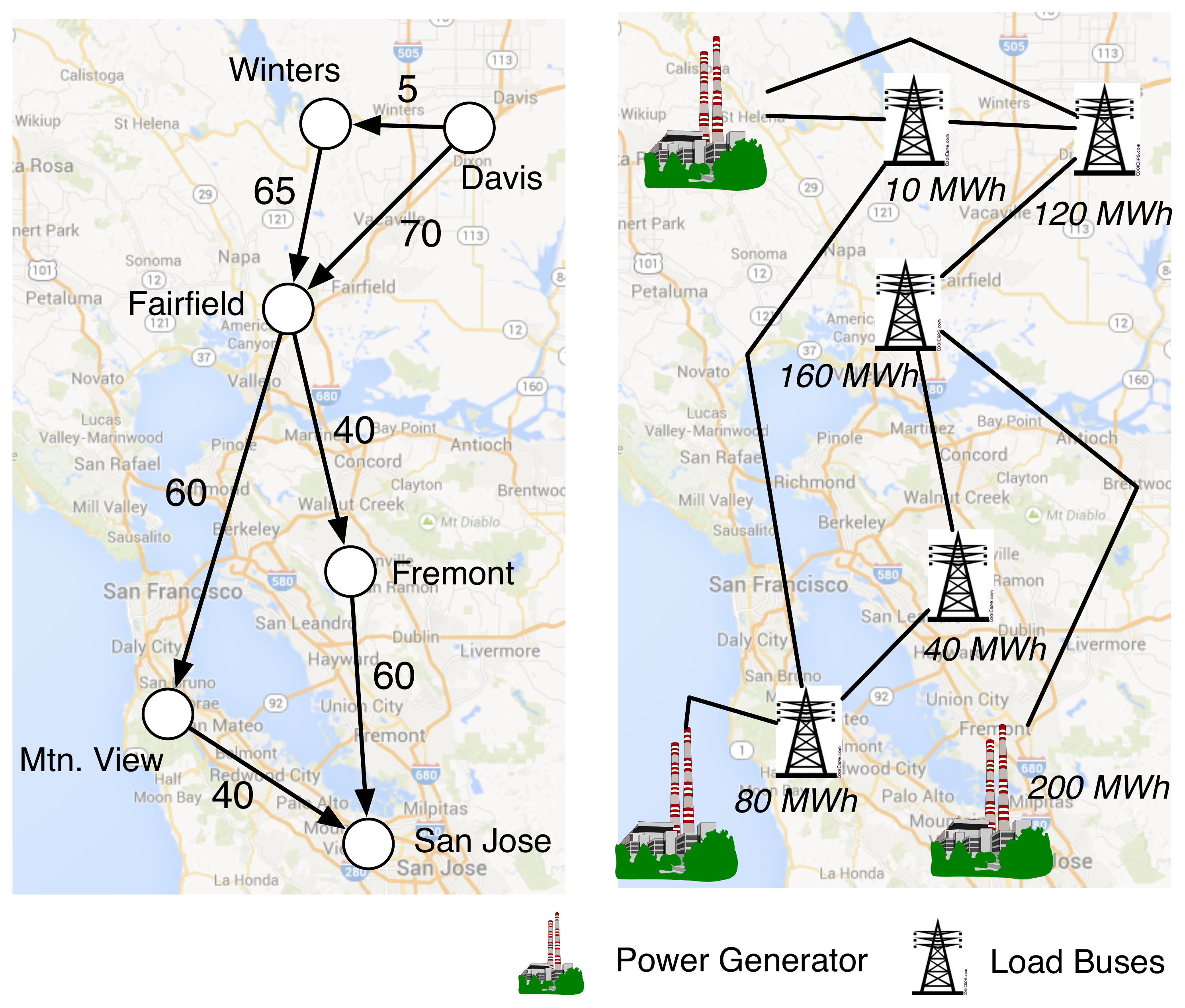}
\caption{(Left) The transportation network ${\cal G}$ for the trip from Davis to San Jose. The value next to each arc denotes the minimum travel time $T_a$ (in minutes); (Right) The power network ${\cal R}$. The base load $u_v$ at each node is denoted in italic. Each of the intermediate node is equipped with an FCS.} 
\label{fig:transport_graph}
\end{figure}

The intermediate nodes, i.e., Winters, Fairfield, Mountain View and Fremont, are equipped with an FCS. Each FCS is capable of supplying 1 kWh to an EV every 5 minutes, and the available charging options are $\{0, 1, 2, 3 \}$ kWh (the same charging options hold for the origin). It is assumed that each EV consumes 1 kWh  of energy to travel  25 miles, and the battery capacity is 6 kWh for all EVs. 
There are two O-D pairs considered in the network. Specifically, $50\%$ of the drivers are traveling from Davis, CA to Mountain View, CA;  and $50\%$ of the drivers are traveling from Davis to San Jose, CA. At the origin, i.e., Davis, the EVs have an initial charge of 4 kWh. As such, there are $|{\cal Q}| = 2$ classes of users.

In the first numerical example, we study how the total number of  EVs can affect the IPSO/ITSO's decision. 
We assume full IPSO/ITSO cooperation such that the social optimal problem \eqref{soc.joint} can be exactly solved. 
As seen in Fig.~\ref{fig:per_traffic}, the traffic pattern changes as we gradually increase the total number of EVs per epoch. For instance, more EVs are routed through Winters, instead of going to Fairfield from Davis directly; similar observations are also made for the Fairfield-Mountain View-San Jose path. This is due to the fact that the power/transportation network has become more congested,  leading to a different traffic pattern.


\begin{figure}[t]
\centering
 \includegraphics[width = 1.0\linewidth]{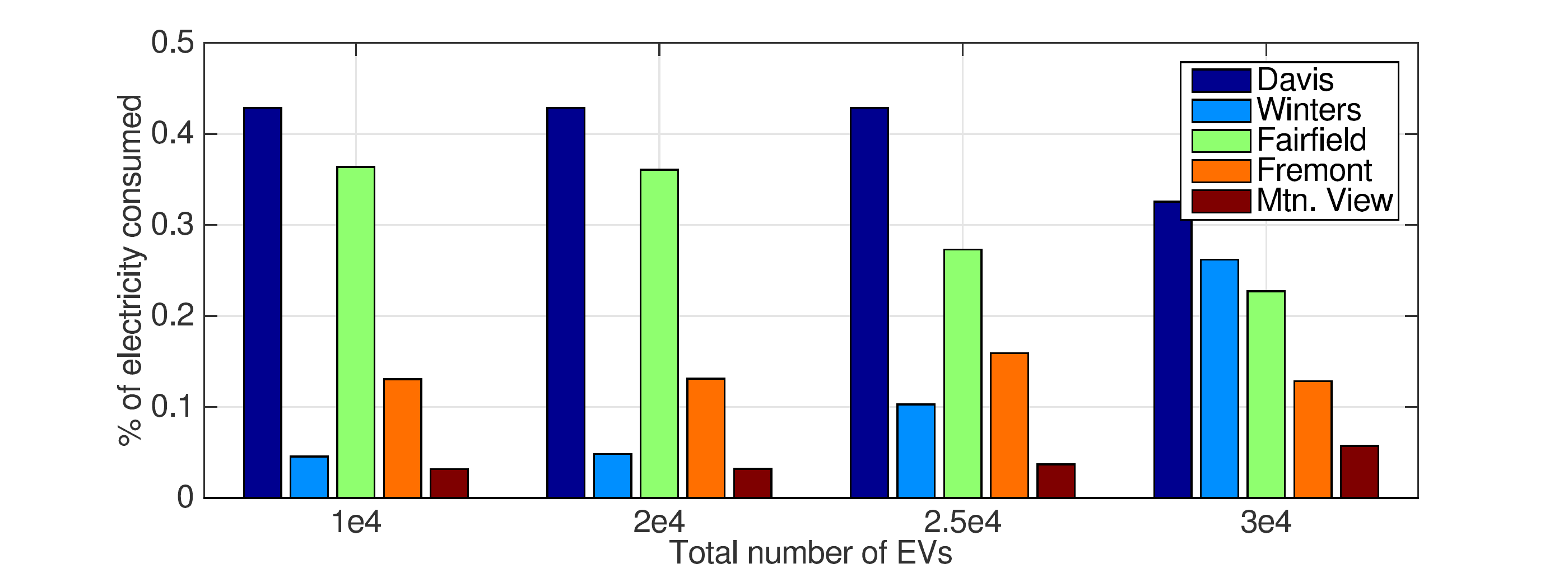}\vspace{.05cm}

 \includegraphics[width = 1.0\linewidth]{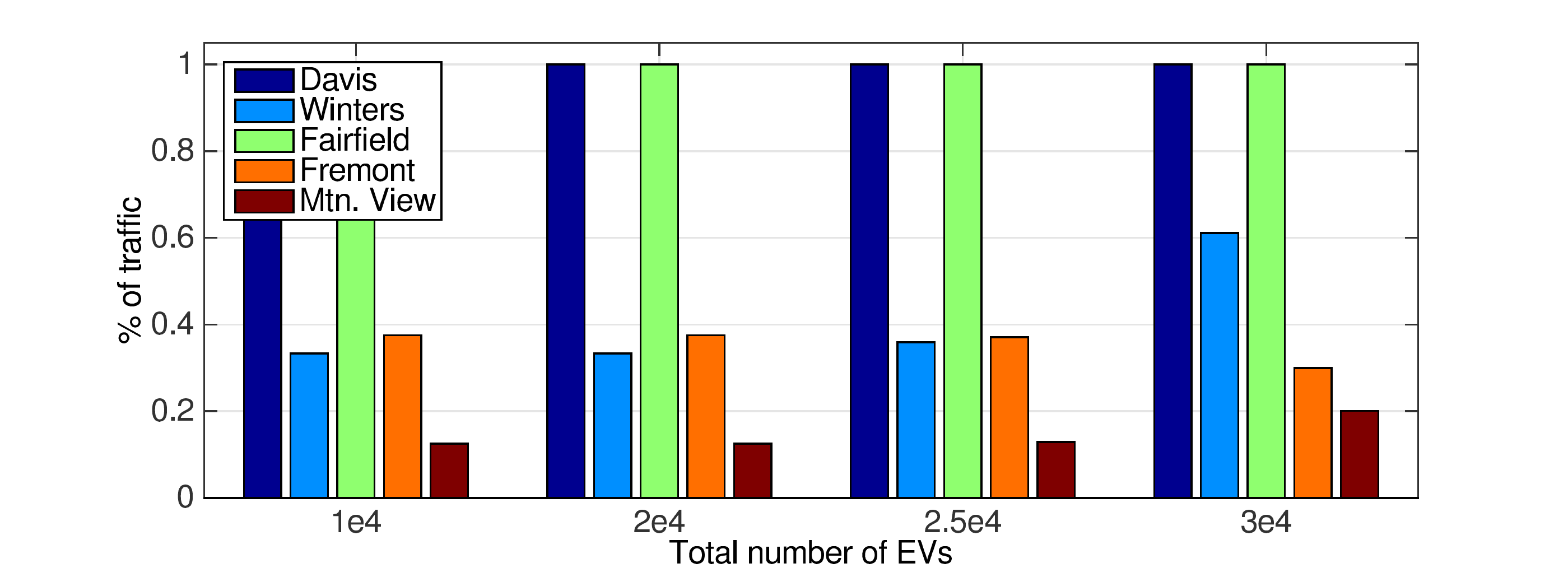}
\caption{Comparing the traffic pattern against the total number of EVs per epoch. (Top) Percentage of electricity consumed at each site; (Bottom) Percentage of traffic leaving from each site.} 
\label{fig:per_traffic}
\end{figure}


Our next step is to study the scenario with ex-ante IPSO/ITSO cooperation. We compare the myopic pricing   scheme to the dual decomposition approach.
The first task is to investigate the behavior of the system under the myopic pricing scheme. The total number of EVs is fixed at $2.26 \times 10^4$ per epoch.  In this case, we initialize the electricity price at each site at $\$50$ per MWh to solve \eqref{soc.itso}. The traffic pattern against iteration number of this disjoint optimization is shown in Table~\ref{dpso}. We observe that the system oscillates between two traffic patterns, one having the lower average traveling time and the other one with a lower electricity cost. As described in Section~\ref{greedy}, this oscillation behavior is due to the lack of cooperation between the IPSO and ITSO. We see that at iteration $i=2n$, the electricity prices are the same across the charging stations, therefore the ITSO assigns the traffic by simply minimizing the travel time. This decision, however, leads to an uneven distribution in energy consumption across the power network ${\cal R}$. At iteration $i=2n+1$, the IPSO lowers the electricity price at Winters; and increases the price at Fairfield, Fremont and Mountain View. This motivates the ITSO to re-assign the traffic pattern. 

An interesting point to note is that the disjoint optimization may even lead to an \emph{infeasible} IPSO decision when the total number of EVs considered is large.  This is an extreme case of the example considered in Table I. In this case,  the inability of the greedy pricing method to correctly model the response of the EV population to posted prices would result in an unsafe increase of load at locations where the grid is congested and hence the load needs to be shed to keep transmission lines as well as transformers safe.


\begin{table}[t]
\centering
\caption{Oscillation of traffic patterns with the greedy method.} \label{dpso}
\begin{tabular}{c || c | c | c}
\hline
~ & SO & DO ($i=2n$) & DO ($i=2n+1$) \\
\hline 
\hline
Davis & 67.80 MWh & 67.80 MWh & 67.80 MWh \\ 
~ & @\$57.38/MWh & @\$57.38/MWh & @\$58.04/MWh \\
\hline
Winters & 12.56 MWh & 7.227 MWh & \bfseries 15.87 MWh \\ 
~ & @\$57.38/MWh & @\$57.38/MWh & @\$54.50/MWh \\
\hline
Fairfield & 49.88 MWh & 57.32 MWh & \bfseries 43.52 MWh \\ 
~ & @\$57.38/MWh & @\$57.38/MWh & @\$66.59/MWh \\
\hline
Fremont & 22.56 MWh & 20.83 MWh & \bfseries 25.23 MWh \\ 
~ & @\$57.38/MWh & @\$57.38/MWh & @\$65.09/MWh \\
\hline
Mtn.~View & 5.392 MWh & 5.031 MWh & \bfseries 5,781 MWh \\ 
~ & @\$57.38/MWh & @\$57.38/MWh & @\$61.73/MWh \\
\hline
\hline
Fr.~Winters & 7,533 & 7,534 & \bfseries 7,936 \\ 
\hline
Fr.~Fremont & 8,475 & 8,409 & \bfseries 9,375 \\ 
\hline
Fr.~Mtn View & 2,825 & 2,825 & \bfseries 3,125 \\
\hline
\hline
Travel time & 188.36~min. & 188.36~min. & 188.39~min. \\
\hline
Objective & \$30,332.55 & \$30,364.61 & \$30,333.06 \\
\hline
\end{tabular}
\end{table}

The above example demonstrates that applying myopic pricing may result in an unstable system. Next, we  investigate the performance of the dual decomposition algorithm (cf.~Proposition~\ref{optdd}), which describes a systematic method for cooperation between the IPSO and ITSO. Here, the total number of EVs is fixed at $2.5 \times 10^4$ per epoch. 
The dual decomposition is initialized by setting $\gamma^{(0)} = 57.5$ and $\bm{\mu}^0 = {\bf 0}$. 
As the dual decomposition algorithm is known to converge to the social optimum, we are interested in studying its convergence speed with the violation in infeasibility. We set the step size as $\alpha_k = 20$ for all $k$ and apply the algorithm on the same scenario as before. 
{\color{black} For the constant $\| (\gamma^{(0)},\bm{\mu}^{(0)}) - (\gamma^\star,\bm{\mu}^\star)\|_2$ in \eqref{eq:primal_infeas}, we upper bound it by solving an MPEC using an approach similar to \cite{7192738}. }

We compare both infeasibility measures against the iteration number in Fig.~\ref{fig:infeas} (Left). We can see that the dual decomposition algorithm converges in approximately $100$ iterations, and it returns a solution that is approximately feasible. 
We observe a ${\cal O}(1/\sqrt{k})$ decaying trend with the actual infeasibility. 

Lastly, we study the effects of ex-post IPSO price adjustment based on the estimated $({\bf w}_k, a_k)$ in \eqref{eq:primal_infeas}. The reserve procurement problem \eqref{robustcap} is approximated using a sample-approximation method, where the candidate $\bm{\eta}$ points for ${\cal N}$ are selected randomly within the bound $[\bm{\eta}_{min}, \bm{\eta}_{max}]$. We assume that the reserve capacity is purchased at a price of $\$55.00$ per MWh at all sites. Here, an interesting comparison is the \emph{overall} cost needed to purchase such reserve capacity and the cost to operate the system under (the estimated) infeasibility, i.e., the dual objective value. The overall cost is shown in Fig.~\ref{fig:infeas} (Right) as `Dual obj.+Reserve Cost'. We observe that such cost is always higher than the social optimum cost due to a possible mismatch between the electricity cost per unit in purchasing the reserve capacity; yet the difference decreases as the iteration number grows.

\begin{figure}[t]
\centering
\includegraphics[width = 0.48\linewidth]{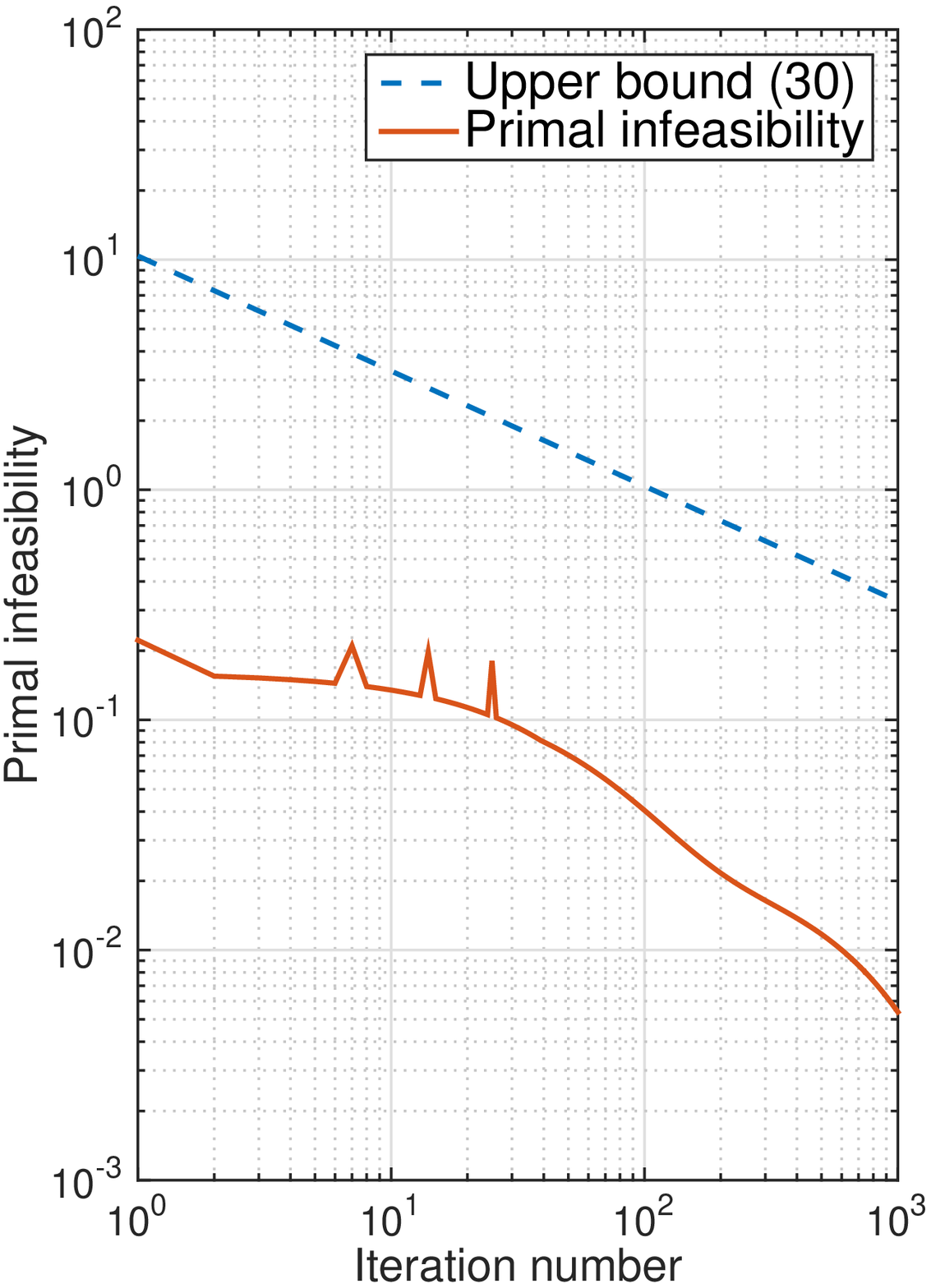} \includegraphics[width = 0.48\linewidth]{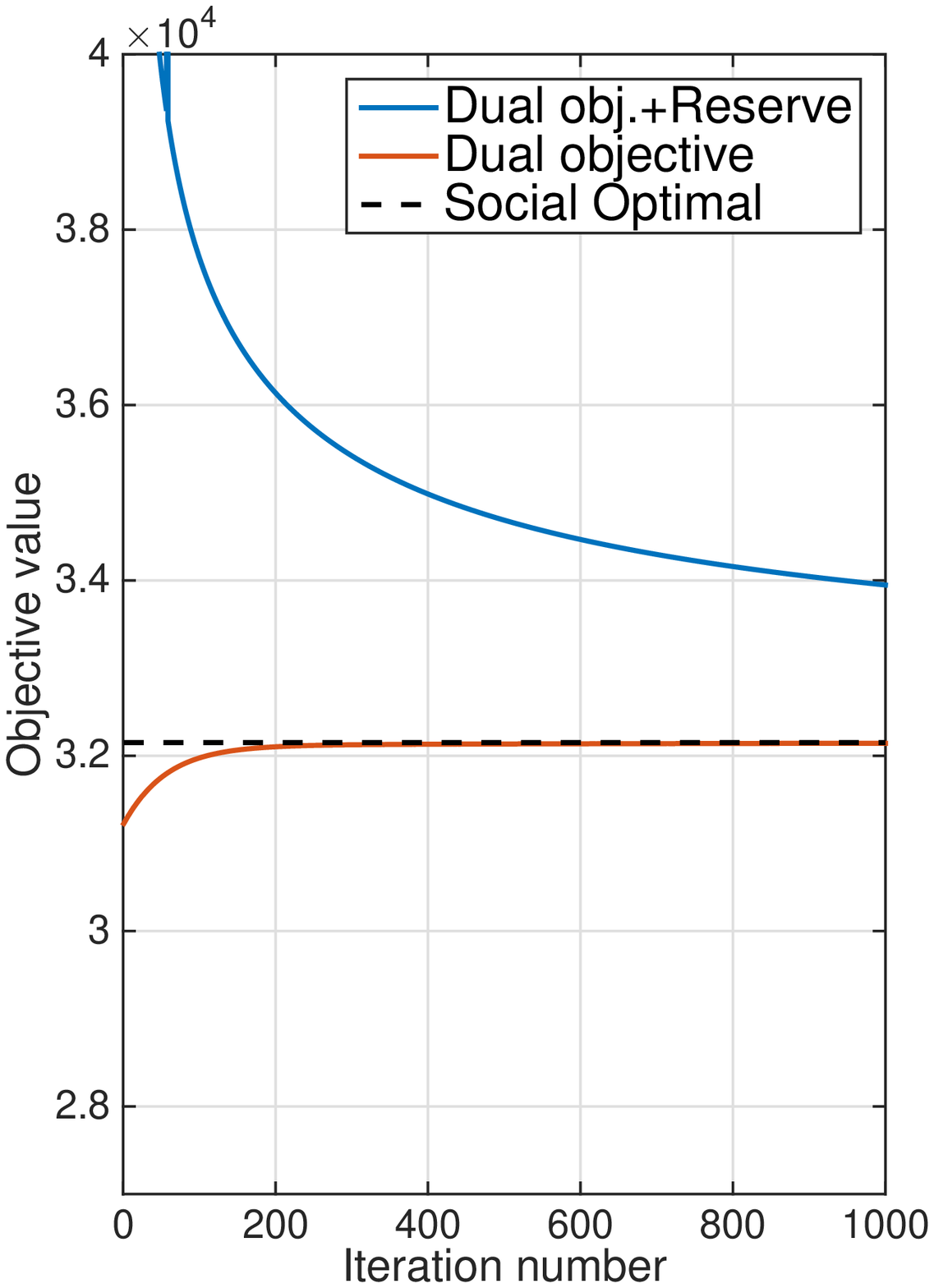}
\caption{(Left) Infeasibility against the iteration number. Notice that primal infeasibility refers to the $\ell_2$-norm $\| (a_k, {\bf w}_k) \|_2$. (Right) Objective value of dual decomposition against the iteration number.} 
\label{fig:infeas} \vspace{-.6cm}
\end{figure}

\section{Conclusions and Future Work}
The implications of large-scale integration of EVs on power and transportation networks, leading to an interdependency between the two infrastructures, were studied under a static setting. We saw that a collaboration between the IPSO and the ITSO can lead EVs towards a socially optimal traffic pattern and energy footprint, and highlighted the adverse effects of ignoring the interconnection between the two infrastructures. We further analyzed the reserve capacity requirements of operating the   grid without a direct collaboration between the IPSO and ITSO. These results were obtained under an ideal static setting and in the absence of retail markets. Important issues remain to be studied in future work. For example, EV charging facilities are expected to be privately-owned, and hence pricing decisions would be left to profit-maximizing entities competing against each other to attract EV drivers to their station. This would affect the IPSO's ability to impose taxes on many arcs in the extended transportation graph and would lower the IPSO's ability to maximize social welfare.  The impact of hourly dynamics of electricity prices and travel patterns is another important aspect that requires further analysis. In this case, non-convexities of the dynamic traffic assignment problem would extend to the IPSO's price design problem.

 {\newblue
\appendices
\section{MPEC for finding $(\gamma^\star, \bm{\mu}^\star)$ in \eqref{eq:primal_infeas}} \label{app:mpec}

To compute the bound \eqref{eq:primal_infeas}, we need an upper bound on  $\| (\gamma^*, \bm{\mu}^* ) \|_2$. To calculate such a bound,  we use an MPEC to enumerate all the possible EV   demand valus  and their corresponding optimal dual variables $(\gamma^*, \bm{\mu}^*)$:
\[
\hspace{-.2cm} \begin{array}{rrl}
& \displaystyle  \max_{ {\bf d},\gamma , \bm{\mu}, {\bf z}_L, {\bf z}_U } & \displaystyle  \| (\gamma , \bm{\mu} ) \|_2 \\
& {\rm s.t.} & {\bf 0} \leq {\bf z}_L \leq \delta {\bf 1},~{\bf 0} \leq {\bf z}_U \leq \delta {\bf 1},  \\
& & {\bf d}_{min} \leq {\bf d} \leq {\bf d}_{max}, \vspace{.1cm} \\
& & \min_{\bf g} ~{\bf 1}^T {\bm c}({\bf g}) \\
& & {\rm s.t.} ~~{\bf z}_U :   {\bf g} \leq {\bf g}^{max},~{\bf z}_L : -{\bf g} \leq -{\bf g}^{min}  \\
& & ~~~~~~ \gamma : {\bf 1}^T ({\bf d} + {\bf u} - {\bf g}) = 0 \\
& & ~~~~~~ \bm{\mu} : {\bf H} ({\bf d} + {\bf u} - {\bf g} ) \leq {\bf m},\vspace{-.5cm}
\end{array} \vspace{.5cm}
\]
where ${\bf d}_{min}, {\bf d}_{max}$ are lower/upper bounds to the  electricity demand ${\bf d}$
requested by the EVs and $\delta > 0$ is a regularization parameter for the 
power generation constraints. 
As seen in \eqref{lmpdef}, the lower-level minimization problem finds the optimal dispatch 
${\bf g}$ and hence the optimal dual variables $(\gamma , \bm{\mu} )$ 
for {\it each} of the possible demand profiles ${\bf d}_{min} \leq {\bf d} \leq {\bf d}_{max}$. }

\bibliographystyle{IEEEtran}
\bibliography{IEEEabrv,transport,New2,price,science,science2,smart_grid}

\end{document}